\newtheorem{theorem}{Theorem}[section]
\newtheorem{lemma}{Lemma}[section]
\newcommand\be{\begin{equation}}
\newcommand\ee{\end{equation}}
\newcommand\ber{\begin{eqnarray}}
\newcommand\eer{\end{eqnarray}}
\newcommand\berr{\begin{eqnarray*}}
\newcommand\eerr{\end{eqnarray*}}
\newcommand\bea{\begin{eqnarray}}
\newcommand\eea{\end{eqnarray}}
\newcommand\lm{\lambda}
\newcommand{\nn}{\nonumber}
\newcommand\vep{\varepsilon}\newcommand{\ii}{\mathrm{i}}
\newcommand{\dd}{\mathrm{d}}
\newcommand\e{\mathrm{e}}\newcommand\pa{\partial}
\begin{document}

\title{Vortices and strings induced from a generalized Abelian Higgs model}
\author{Lei Cao\\School of Mathematics and Statistics\\Henan University\\
Kaifeng, Henan 475004, PR China\\ \\Shouxin Chen\\
School of Mathematics and Statistics\\Henan University\\
Kaifeng, Henan 475004, PR China}
\date{}
\maketitle

\begin{abstract}
In this note we construct self--dual vortices and cosmic strings from the generalized Abelian Higgs theory. A special model of the theory is of focused interest in which the Higgs potential is a polynomial depending on $m$. When $|m|>0$, we obtain sharp existence theorems for vortices and strings correspond to gravity absence and appearance, respectively over the full plane. In particular, the vortex solution is unique if $m<0$. In order to over the difficulties posed by gravity, we introduce the regularization method when there are at least two distinct points among the set of centers of strings. When all these points coincide, the fixed point theorem works well. A series properties regarding vortices and strings for $|m|>0$ are also established.
\end{abstract}
\medskip
\begin{enumerate}

\item[]
{Keywords:} gauge field theory, self--duality, vortices, cosmic strings, nonlinear elliptic equations.

\item[]
{MSC numbers(2020):} 35J60, 81T13.

\end{enumerate}

\section{Introduction}\label{s0}
\setcounter{equation}{0}

Spontaneous symmetry breaking has attracted a widespread interest and is a prominent topic of research in several branches of physics, as well as in mathematics \cite{Raj,Man}. The idea behind it is essentially inspired by the thermodynamic theory of phase transitions, known as the Ginzburg--Landau theory. Quantum field describes how spontaneous symmetry breaking leads to the generation of mixed states in the form of topological defects like domain walls, vortices, monopoles, and instantons\cite{Ki,Kib}. Vortices have been studied for a long time as non--perturbative topological solutions in $2+1$ dimensions. However, the study of the effect of gravity on vortices is still in its infancy \cite{Lo,Loh}. This is partly due to the fact that Einstein's gravity is trivial in $2+1$ dimensions, in the sense that vacuum spacetime is locally flat outside of local sources although with global effects \cite{Des}. Vortices give rise to a soliton--like string structure in spacetime in the presence of gravity, and are known as cosmic strings \cite{Ze,Kib,Vi,Wi,Br}.

One of the more intriguing features of the cosmic string solution is the presence of multiple centers of energy and curvature around the local strings, which are thought to act as seeds for matter accretion in the early stages of galaxy formation. Strings are rigorously constructed only in the Bogomol'nyi or self--duality phase \cite{Lin,Co,Y,Ya,Yan,Yang}, because of the difficulty of studying the coupled system of the Einstein equations and the usual second--order Euler--Lagrange equations of the matter Lagrangian density equation.

There's a lot of research being done on cosmic strings. Brandenberger et. al. \cite{Bra} studied the classical cosmic string solution in a theory with dynamical $U(1)$ symmetry breaking. Slagter \cite{Sla} presented a cosmic string solution in Einstein--Yang--Mills Gauss--Bonnet theory on a warped $5$ dimensional spacetime conform the Randall--Sundrum--$2$ theory. Yang \cite{Yang} constructed multiple cosmic strings of an Abelian Higgs theory based on the formalism in \cite{Co}.

In this paper, we concentrate on studying another modified version of the Abelian--Higgs model \cite{Con} by coupling the Yang--Mills term of the Lagrangian to the Higgs field via a continuous function denoted by $F(|u|^2)$. From the Bogomolny argument, the model admits vortex--like topological solitons in the absence of gravity and string--like topological solitons when coupled with Einstein equations. We will present the model and its solutions which contain two types vortices and strings in the next three sections.

The rest of our paper is organized as follows. In section 2, we introduce the modified Ginzburg--Landau Lagrangian and deduce the corresponding Bogomolny equations or self--dual system which can actually be reduced to a nonlinear elliptic equation with highly complex exponential form and sources through an integration of the Einstein equations. In section 3, we establish sharp existence theorems of the model for $m>0$. In the absence of gravity, we prove the existence of vortices by using a monotone iteration. When gravity is taken into account, we introduce the method of regularization and fixed point theorem to show that the multiple string solutions are valid under a sufficient condition imposed only on the total string number $N$. In section 4, we deal with the case of the nonlinear elliptic equations which are not covered in section 3 i.e.$m<0$. We show that there is a unique vortex solution with the absence of gravity. Using a similar method as in the previous section, we can also prove the existence of string solutions, for which a sufficient condition is presented. The asymptotic estimates for the solutions obtained are established both in section 3 and in section 4.

\section{Gauge field theory and strings}\label{s1}
\setcounter{equation}{0}

It will be interesting to study the cosmic strings arising from the generalized Abelian Higgs theory. The modified Lagrangian action density assumes the form \cite{Con,Izq}
\be\label{1.1}
\mathcal{L}=-\frac{1}{4}F(s)g^{\mu\mu'}g^{\nu\nu'}F_{\mu\nu}F_{\mu'\nu'}+\frac{1}{2}g^{\mu\nu}D_\mu u\overline{D_\nu u}-V(s),
\ee
where $u$ denotes a complex scalar field, $F_{\mu\nu}=\partial_\mu A_\nu-\partial_\nu A_\mu$ is the electromagnetic curvature induced from a real--valued gauge vector field $A_\mu (\mu, \nu=0,1,2,3, t=x_0)$, $D_\mu u=\partial_\mu u-\ii A_\mu u$ is the gauge--covariant derivatives, $F(s)$ and $V(s)$ are suitable real functions depending only on $s=|u|^2$ to be determined to achieve a Bogomol'nyi structure \cite{Bog}. Let $g_{\mu\nu}$ be the gravitational metric of the spacetime with the Minkowski signature $(+---)$. With $F(s)=h^2(s)$ and $V(s)=\frac{1}{2}w^2(s)$, \eqref{1.1} becomes
\be\label{1.2}
\mathcal{L}=-\frac{1}{4}h^2(s)g^{\mu\mu'}g^{\nu\nu'}F_{\mu\nu}F_{\mu'\nu'}+\frac{1}{2}g^{\mu\nu}D_\mu u\overline{D_\nu u}-\frac{1}{2}w^2(s)
\ee
and the associated energy--momentum tensor is
\be\label{1.3}
T_{\mu\nu}=-h^2(s)g^{\mu'\nu'}F_{\mu\mu'}F_{\nu\nu'}+\frac{1}{2}(D_\mu u\overline{D_\nu u}+\overline{D_\mu u}D_\nu u)-g_{\mu\nu}\mathcal{L}.
\ee

In the following, we will consider the cosmic strings arising from the Einstein equations
\be\label{1.4}
G_{\mu\nu}=-8\pi G T_{\mu\nu}
\ee
coupled with the matter--field equation or the Euler--Lagrange equation of the matter Lagrangian density \eqref{1.2} over the gravitational spacetime.

The simplest metric element of the spacetime with the presence of gravity reads
\be\label{1.5}
\dd s^2=\dd t^2-(\dd x^3)^2-\e^\eta((\dd x^1)^2+(\dd x^2)^2),~~\eta=\eta(x^1, x^2).
\ee
If the Gauss curvature of $(\mathbb{R}^2,\e^\eta\delta_{ij})$ be denoted by $K_\eta$ which is given by
\be\label{1.6}
K_\eta=-\frac 12 \e^{-\eta}\triangle\eta.
\ee
Then, in view of \eqref{1.5}, the components of the Einstein tensor reduce into
\be\label{1.7}
-G_{00}=G_{33}=K_\eta;~~G_{\mu\nu}=0,~~(\mu,\nu)\neq(0,0)~\text{or}~(3,3).
\ee
Besides, we infer from \eqref{1.7}, the energy--momentum tensor $T_{\mu\nu}$ hold the condition
\be\label{1.8}
-T_{00}=T_{33}=-\mathcal{H};~~T_{\mu\nu}=0,~~(\mu,\nu)\neq(0,0)~\text{or}~(3,3),
\ee
where $\mathcal{H}=-\mathcal{L}$ is the Hamiltonian energy density.

With \eqref{1.7} and \eqref{1.8}, we see that the Einstein equations \eqref{1.4} are recast into the following two--dimensional single equation
\be\label{1.9}
K_\eta=8\pi G\mathcal{H}.
\ee

On the other hand, with the string metric is given by \eqref{1.5} and the complex scalar field $u$ depends only on $x^1, x^2$, we obtain from \eqref{1.3} the Hamiltonian energy density
\bea\label{1.10}
\mathcal{H}&=&T_{00}=-T_{33}=-\mathcal{L}\nn\\
&=&\frac 12\e^{-2\eta}h^2(s)F_{12}^2+\frac 12\e^{-\eta}(|D_1 u|^2+|D_2 u|^2)+\frac 12 w^2(s)\nn\\
&=&\frac 12\left((\e^{-\eta}h(s)F_{12}\mp w(s))^2+\e^{-\eta}|D_1 u\pm \ii D_2 u|^2\right)\pm \frac 12\e^{-\eta}F_{12}\nn\\
&& \pm \frac 12\e^{-\eta}\left((2h(s)w(s)-1)F_{12}+\ii(D_1 u \overline{D_2 u}-\overline{D_1 u} D_2 u)\right).
\eea
On the right hand side of \eqref{1.10}, the first term consists of quadratures, the second term is topological, and the third term may be recognized as the total divergence of a current density. For this purpose, we introduce a current density
\be\label{1.11}
J_k=\frac{\ii}{2}f(s)(u \overline{D_k u}-\overline{u} D_k u),~~k=1,2.
\ee
By Kato's identity
\be\label{1.12}
\partial_j(\phi\overline{\psi})=\overline{\psi}\partial_j\phi+\phi\partial_j\overline{\psi}=\overline{\psi}D_j\phi+\phi \overline{D_j\psi},
\ee
we can easily verify the following commutation relation for gauge--covariant derivatives
\be\label{1.13}
[D_j,D_k]u=(D_j D_k-D_k D_j)u=-\ii(\partial_j A_k-\partial_k A_j)u=-\ii F_{jk}u,~~j,k=1,2.
\ee
Differentiating \eqref{1.11} and applying $\partial_j(u\overline{u})=\overline{u}D_j u+u \overline{D_j u}$, we obtain
\be\label{1.14}
J_{12}=\partial_1 J_2-\partial_2 J_1=-sf(s)F_{12}+\ii(f(s)+f'(s)s)(D_1 u \overline{D_2 u}-\overline{D_1 u} D_2 u).
\ee
Besides, it can be shown that there holds the identity
\be\label{1.15}
|D_1 u\pm \ii D_2 u|^2=|D_1 u|^2+|D_2 u|^2\mp \ii(D_1 u \overline{D_2 u}-\overline{D_1 u} D_2 u).
\ee
If we set
\be\label{1.16}
2h(s)w(s)=1-s
\ee
and use \eqref{1.14} and \eqref{1.15}, then the right--hand side of \eqref{1.10} becomes
\be\label{1.17}
\mathcal{H}=\frac 12\left((\e^{-\eta}h(s)F_{12}\mp w(s))^2+\e^{-\eta}|D_1 u\pm \ii D_2 u|^2\right)\pm \frac 12\e^{-\eta} F_{12}\pm \frac 12\e^{-\eta} J_{12},
\ee
where $J_k$ is given by \eqref{1.11} with $f(s)=1$.

Integrating \eqref{1.17} over $(\mathbb{R}^2,\e^\eta\delta_{ij})$, we derive the Bogomol'nyi topological lower bound $E=\int \mathcal{H}\e^\eta\dd x\geq \pi N$ where $N$ represents the total string number. The energy lower bound is saturated if and only if the quadratic terms in \eqref{1.17} identically vanish:
\bea
h(|u|^2)F_{12}&=&\pm \e^\eta w(|u|^2),\label{1.18}\\
D_1 u\pm \ii D_2 u&=&0.\label{1.19}
\eea

The equation \eqref{1.19} can be rewritten as
\be\label{1.20}
(\pa_1\pm \ii \pa_2)u=\ii(A_1\pm \ii A_2)u.
\ee
Resolving \eqref{1.20}, we see that, away from the zeros of $u$, there holds the identity
\be\label{1.21}
F_{12}=\mp\frac 12 \triangle\ln|u|^2.
\ee
To proceed, we know from \eqref{1.20} and the $\overline{\pa}$--Poincar\'{e} lemma \cite{Jaf} that the zeros of $u$ are all discrete and of integer multiplicities. Let the zeros of $u$ be $p_1,p_2,\ldots,p_N$ (counting multiplicities). Then, combining \eqref{1.18} and \eqref{1.21} and observing \eqref{1.16}, we obtain the equation
\be\label{1.22}
\triangle v=\e^\eta\frac{\e^v-1}{h^2(\e^v)}+4\pi\sum_{s=1}^N \delta_{p_s}(x),~~x\in\mathbb{R}^2,
\ee
where $|u|^2=\e^v$ and $\delta_{p_s}(x)$ is the Dirac function. Besides, with \eqref{1.18} and \eqref{1.19}, we see that $T_{\mu\nu}$ defined by \eqref{1.3} satisfies $T_{\mu\nu}=0$ when $(\mu,\nu)\neq(0,0)$ or $(3,3)$. Hence, we get the Gauss curvature equation \eqref{1.9} again.

The situation without gravity, or $G=0$ or $\eta=0$ is of independent interest, and the equation \eqref{1.22} becomes the following form
\be\label{1.22b}
\triangle v=\frac{\e^v-1}{h^2(\e^v)}+4\pi\sum_{s=1}^N \delta_{p_s}(x),~~x\in\mathbb{R}^2,
\ee
which describes multiply distributed electrically and magnetically charged vortices.

We now turn our attention to the Einstein equation \eqref{1.9} where $\mathcal{H}$ is given in \eqref{1.10}. Using \eqref{1.18} to represent the second line of \eqref{1.10} as
\be\label{1.23}
\mathcal{H}=\pm\e^{-\eta}h(s)F_{12}w+\frac 12\e^{-\eta}(|D_1 u|^2+|D_2 u|^2).
\ee
When stay away from the zeros of $u$ and use the relation $|u|^2=\e^v$, we have from \eqref{1.19} that
\be\label{1.24}
|\nabla u|^2=\frac 12\e^v|\nabla v|^2.
\ee
Furthermore, we may extend \eqref{1.24} to get
\be\label{1.25}
|D_1 u|^2+|D_2 u|^2=\frac 12\e^v|\nabla v|^2.
\ee
Consequently, in view of \eqref{1.16}, \eqref{1.25} and \eqref{1.21}, we may recast \eqref{1.23} as
\be\label{1.26}
4\e^\eta\mathcal{H}=(\e^v-1)\triangle v+\e^v|\nabla v|^2,
\ee
away from the zeros $p_1,p_2,\ldots,p_N$ of $u$. Taking account of these zeros, \eqref{1.26} leads us to arrive at the relation
\be\label{1.27}
4\e^\eta\mathcal{H}=\triangle(\e^v-v)+4\pi\sum_{s=1}^N \delta_{p_s}(x),~~x\in\mathbb{R}^2.
\ee
Therefore, returning to \eqref{1.9} with \eqref{1.6}, we conclude that the quantity
\be\label{1.28}
\frac{\eta}{4\pi G}+\e^v-v+\sum_{s=1}^N\ln|x-p_s|^2
\ee
is a harmonic function over $\mathbb{R}^2$, which may be taken to be an arbitrary constant. Since we are interested in solutions in the broken symmetry category which implies
\be\label{1.29}
v(x)\to 0~~\text{as}~~|x|\to\infty.
\ee
Thus, we obtain the gravitational metric factor
\be\label{1.30}
\e^\eta=\lambda\left(\e^{v-\e^v}\prod_{s=1}^N|x-p_s|^{-2}\right)^{4\pi G}=O(|x|^{-8\pi G N}),~~\lambda>0,~~|x|\gg 1,
\ee
where $\lambda>0$ is a free coupling parameter which may be taken to be arbitrarily large. This result leads to the deficit angle
\be\label{1.31}
\delta=8\pi^2 G N.
\ee

On the other hand, integrating \eqref{1.9} leads to the total curvature
\be\label{1.32}
\int_{\mathbb{R}^2}K_\eta\e^\eta\dd x=8\pi^2 G N,
\ee
in agreement with the deficit angle \eqref{1.31} again.

Inserting \eqref{1.30} into \eqref{1.22}, we are lead to the governing equation for $N$ prescribed cosmic strings
\be\label{1.33}
\triangle v=\lambda\left(\e^{v-\e^v}\prod_{s=1}^N|x-p_s|^{-2}\right)^{4\pi G}\frac{\e^v-1}{h^2(\e^v)}+4\pi\sum_{s=1}^N \delta_{p_s}(x),~~x\in\mathbb{R}^2.
\ee

In the subsequence sections, we will concentrate on a special model of \eqref{1.2} and \eqref{1.9}. Taking
\be\label{1.34}
h(s)=\frac{\kappa}{2(1+s)^{\frac{m}{2}}},~~w(s)=\frac{(1+s)^{\frac{m}{2}}(1-s)}{\kappa},
\ee
where $\kappa>0$ and $|m|> 0$ are two parameters. In view of \eqref{1.22b} and \eqref{1.33}, this example gives us the multiple vortex equation
\be\label{1.35}
\triangle v=\beta(1+\e^v)^m(\e^v-1)+4\pi\sum_{s=1}^N \delta_{p_s}(x),~~x\in\mathbb{R}^2,
\ee
and the multiple string equation
\be\label{1.36}
\triangle v=\beta\left(\e^{v-\e^v}\prod_{s=1}^N|x-p_s|^{-2}\right)^{4\pi G}(1+\e^v)^m(\e^v-1)+4\pi\sum_{s=1}^N \delta_{p_s}(x),~~x\in\mathbb{R}^2.
\ee
where we set $\beta=\frac{4\lambda}{\kappa^2}$.

\section{Proof of existence for $m>0$}\label{s2}
\setcounter{equation}{0}

This section is devoted to study the equation \eqref{1.35} and \eqref{1.36} for $m>1$, for which we establish existence theorems for vortices and strings on the full plane and present the proof. We first study the easy case, when gravity is absent, which produces vortices. Next we introduce a $\delta$--regularization of the nonlinear elliptic equation and use an approximation method to construct strings when there are at least two distinct points among the set of centers $p_1,p_2,\ldots,p_N$. When all these points coincide, we pursue a solution by reducing the partial differential equation to its ordinary differential equation version within a radial symmetry assumption.

\subsection{Existence of vortices}

We consider the equation \eqref{1.35} for $m>1$. The main results read as follows.
\begin{theorem}\label{th3.1}
For any given prescribed vortex points $p_1, p_2,\ldots,p_N\in \mathbb{R}^2$, the equation \eqref{1.35} has a negative solution which satisfies the boundary condition \eqref{1.29} and obey the sharp decay estimates
\bea
|v(x)|&=&O(\e^{-(1-\vep)\sqrt{2^{m}\lm}|x|}),~~|x|\to\infty,\label{3.1}\\
|\nabla v(x)|&=&O(\e^{-(1-\vep)\sqrt{2^{m}\lm}|x|}),~~|x|\to\infty,\label{3.2}
\eea
where $\vep>0$ is arbitrary small.
\end{theorem}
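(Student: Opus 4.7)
The plan is to remove the Dirac sources by a smooth background substitution, construct a negative solution by a monotone iteration (super/sub-solution) scheme on exhausting balls, and finally establish the sharp exponential decay via linearization around the vacuum $v=0$. For the background, fix $\mu>0$ and set
\[
u_0(x) = -\sum_{s=1}^N \ln\!\Bigl(1+\tfrac{\mu}{|x-p_s|^2}\Bigr),
\]
so that a direct computation gives $\triangle u_0 = 4\pi\sum_{s=1}^N\delta_{p_s} - g$, where $g(x) = \sum_{s=1}^N 4\mu/(|x-p_s|^2+\mu)^2$ is smooth, positive, bounded by $4N/\mu$, integrable, and $O(|x|^{-4})$ at infinity; moreover $u_0<0$ on $\mathbb{R}^2$ and $u_0\to 0$ at infinity. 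The substitution $v = u+u_0$ turns \eqref{1.35} into the regular semilinear PDE
\[
\triangle u = \beta\bigl(1+\e^{u+u_0}\bigr)^m\bigl(\e^{u+u_0}-1\bigr) + g,\qquad x\in\mathbb{R}^2,
\]
and the boundary condition \eqref{1.29} becomes $u\to 0$ at infinity.

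Next I exhibit an ordered pair of super- and sub-solutions. For $\overline u\equiv 0$ the super-solution inequality reads $g\le\beta(1+\e^{u_0})^m(1-\e^{u_0})$; it holds near each $p_s$ (the right-hand side tends to $\beta$ while $g$ stays bounded by $4N/\mu$), at infinity (where $g=O(|x|^{-4})$ while the right-hand side is $\sim 2^m\beta|u_0|=O(|x|^{-2})$), and on the intermediate compact region once $\mu$ is sufficiently large. For a sub-solution, taking $\mu$ large enough that $g\le\beta/2$ globally, the constant $\underline u\equiv -K$ with $K$ so large that $(1+\e^{-K+u_0})^m(1-\e^{-K+u_0})\ge 1/2$ satisfies $\beta(1+\e^{-K+u_0})^m(1-\e^{-K+u_0})\ge\beta/2\ge g$, so $-K$ is a global sub-solution. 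Standard monotone iteration for the Dirichlet problem on $B_R$ then yields $u_R\in[-K,0]$; comparison gives monotonicity of $\{u_R\}$ in $R$ and a uniform $L^\infty$ bound, and interior elliptic estimates together with a diagonal extraction produce a limit $u\in C^2_{\mathrm{loc}}(\mathbb{R}^2\setminus\{p_1,\dots,p_N\})$ solving the transformed equation with $-K\le u\le 0$. The associated $v=u+u_0\le u_0<0$ is strictly negative on $\mathbb{R}^2\setminus\{p_1,\dots,p_N\}$.

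For the sharp decay I use the linearization $(1+\e^v)^m(\e^v-1) = 2^m v + O(v^2)$ near $v=0$: outside a large disc \eqref{1.35} reads $\triangle v = 2^m\lm\, v\,(1+o(1))$, where $\lm$ denotes the effective coupling constant appearing in \eqref{3.1}. For fixed small $\vep>0$ the radial barrier $\varphi(x) = C_\vep\,\e^{-(1-\vep)\sqrt{2^m\lm}|x|}$ satisfies $\triangle\varphi\le (1-\vep)^2\,2^m\lm\,\varphi$ for $|x|$ large. Applying the maximum principle to $\pm u_R-\varphi$ on an annulus $B_R\setminus B_{R_0}$, with $C_\vep$ chosen so that $|u_R|\le\varphi$ on $\partial B_{R_0}$ and using $u_R=0$ on $\partial B_R$, one obtains $|u_R|\le\varphi$ uniformly in $R$; passing $R\to\infty$ yields both $u\to 0$ at infinity (completing \eqref{1.29}) and the decay estimate \eqref{3.1}. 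The gradient bound \eqref{3.2} then follows from interior $C^{1,\alpha}$ elliptic estimates applied on unit balls centred at points of large $|x|$, whose right-hand side is pointwise controlled by a constant multiple of $\e^{-(1-\vep)\sqrt{2^m\lm}|x|}$.

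The main obstacle is ensuring the decay at infinity: the constant sub-solution $-K$ does not itself tend to $0$, so the barrier argument of the third step must be carried out uniformly in $R$ at the approximation level rather than a posteriori on $u$ alone. The sign control $u_R\le 0$ throughout the iteration (inherited from the super-solution) keeps the rapidly growing factor $(1+\e^{u+u_0})^m$ in the regime where the linearization around $v=0$ is valid, and this is the technical heart of the argument.
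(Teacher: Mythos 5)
There is a genuine gap, and it is the sign of your barriers. For an equation written as $\triangle u = F(x,u)$, a supersolution must satisfy $\triangle \overline u \le F(x,\overline u)$ (so that it lies above solutions), and a subsolution $\triangle \underline u \ge F(x,\underline u)$. For $\overline u\equiv 0$ the supersolution inequality is therefore $g\ge \beta(1+\e^{u_0})^m(1-\e^{u_0})$, whereas the inequality you actually verify, $g\le \beta(1+\e^{u_0})^m(1-\e^{u_0})$, is precisely the \emph{sub}solution inequality: it says that $v=u_0$ is a distributional subsolution of \eqref{1.35} (this is exactly the role the analogous function $v_\delta$ plays in the paper's Lemma 3.2 for the string equation). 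Your constant $-K$ is likewise a subsolution. So you have two subsolutions and no supersolution, and the trap $-K\le u_R\le 0$ --- which you yourself identify as the technical heart --- is unfounded. It is also pinned on the wrong side of the solution: $u\le 0$ would force $v=u+u_0\le u_0$, hence $|v(x)|\ge |u_0(x)|\sim \mu N|x|^{-2}$ at infinity, flatly contradicting the exponential decay \eqref{3.1} you assert two steps later; for the true solution one has $u=v-u_0>0$ for large $|x|$, since $v$ is exponentially small while $-u_0\sim \mu N|x|^{-2}$. The correct supersolution in the shifted variable is $\overline u=-u_0>0$ (i.e.\ $v=0$, with the Dirac masses absorbing the inequality at the $p_s$), which is what the paper uses ($w_+=-v_0$), paired with Taubes' solution $v_\beta$ of $\triangle v=\beta(\e^v-1)+4\pi\sum_{s=1}^N\delta_{p_s}$ as subsolution; note also that for $m>1$ the nonlinearity is not monotone in $u$, so both the iteration and your comparison-based monotonicity of $u_R$ in $R$ require the shift $k>2^m\beta$ as in the paper's scheme \eqref{3.9}.

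The decay step fails independently, because it is run on the wrong variable. The shifted equation carries the inhomogeneity $g=O(|x|^{-4})$ and the background $u_0=O(|x|^{-2})$ inside the nonlinearity, so $u=v-u_0$ has an unavoidable polynomial tail $|u|\sim|u_0|\sim \mu N|x|^{-2}$; the bound $|u_R|\le \varphi$ with $\varphi=C_\vep\,\e^{-(1-\vep)\sqrt{2^m\lm}|x|}$ is false for large $R$. Mechanically, at the comparison point the maximum-principle inequality for $u_R\pm\varphi$ picks up the term $g$ (plus the $u_0$-contribution from linearizing the nonlinearity), of polynomial size, which overwhelms the exponentially small margin $(1-(1-\vep)^2)\,2^m\beta\,\varphi$, so the differential inequality $\triangle(u_R+\varphi)\le c(x)(u_R+\varphi)$ cannot be arranged near infinity. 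The estimate \eqref{3.1} is a statement about $v$ and must be proved on $v$ directly in the exterior region $|x|>r_0$, where \eqref{1.35} has neither sources nor inhomogeneous terms: there $\triangle v=f(x)v$ with $f(x)\to 2^m\beta$ (and $\beta=4\lambda/\kappa^2$ is the constant that should appear in the rate), after which either an exponential comparison function or the paper's route --- $W^{2,2}$ control giving $v\to 0$, then linearization, then differentiating the equation and applying $L^2$-estimates to $V=\pa_j v$ --- yields \eqref{3.1} and \eqref{3.2}. With the ordering corrected ($\underline v=v_\beta$, $\overline v=0$) and the decay argument transferred to $v$, your overall skeleton (background shift, monotone iteration, linearization) does match the paper's proof.
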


\begin{proof}
We first show that if $v$ is a solution of \eqref{1.35}, then there must have $v(x)<0$ for all $x\in\mathbb{R}^2$. Otherwise, suppose $v(x)\geq 0$. Then the right--hand side of \eqref{1.35} says that $\triangle v\geq 0$. Consequently, the maximum principle gives us $v\leq 0$, which makes a contradiction with the assumption. Therefore, we have $v\leq 0$. Furthermore, if $v(x)=0$ for some $x\in\mathbb{R}^2$. Inserting $v(x)=0$ into \eqref{1.35} we see that, the left--hand side of \eqref{1.35} equals zero, while the right--hand side not always vanishes. Hence, $v(x)<0$ for all $x\in\mathbb{R}^2$.

In the following, we will construct super-- and subsolutions for \eqref{1.35} over $\mathbb{R}^2$. Let us consider the equation
\be\label{3.3}
\triangle v=\beta(\e^v-1)+4\pi\sum_{s=1}^N \delta_{p_s}(x),~~x\in\mathbb{R}^2, \beta>0.
\ee
Taubes \cite{Tau} has shown that the equation \eqref{3.3} has a negative solution $v_\beta$ that vanishes at infinity for all $\beta>0$ and satisfies $v_\beta<0$. Hence, \eqref{3.3} leads us to get
\bea\label{3.4}
\triangle v_\beta&=&\beta(\e^{v_\beta}-1)+4\pi\sum_{s=1}^N \delta_{p_s}(x)\nn\\
&\geq&\beta(1+\e^{v_\beta})^m(\e^{v_\beta}-1)+4\pi\sum_{s=1}^N \delta_{p_s}(x),
\eea
which means that $v_\beta$ is a subsolution of \eqref{1.35} for $m>0$ in the sense of distribution. Besides, taking $\hat{v}=0$, we see that
\be\label{3.5}
0=\triangle \hat{v}\leq\beta(\e^{\hat{v}}-1)+4\pi\sum_{s=1}^N \delta_{p_s}(x).
\ee
Hence, we conclude that $\hat{v}=0$ is a supersolution of \eqref{1.35} for $m>0$ in the sense of distribution.

Since the super-- and subsolutions are not in classical sense, we need to construct a iterative sequence in a suitable norm. For this purpose, we introduce the background functions
\be\label{3.6}
v_0=-\sum_{s=1}^N\ln\left(1+|x-p_s|^{-2}\right),~~g=4\sum_{s=1}^N(1+|x-p_s|^2)^{-2},
\ee
which satisfies
\be\label{3.7}
\triangle v_0=4\pi\sum_{s=1}^N \delta_{p_s}-g.
\ee
Use the substitution $v=v_0+w$ in \eqref{1.35}, we have
\be\label{3.8}
\triangle w=\beta(1+\e^{v_0+w})^m(\e^{v_0+w}-1)+g.
\ee
From the above analysis, we see that $w_+$ and $w_-$, which satisfy $\hat{v}=v_0+w_+$ and $v_\beta=v_0+w_-$, are super-- and subsolutions of \eqref{3.8} in the sense of distribution. Then, we can use a monotone iteration method to construct a solution $w$ of \eqref{3.8}, satisfies $w_-<w<w_+$ and vanishes at infinity. To this end, we introduce the following iteration sequence on $\mathbb{R}^2$,
\bea
\triangle w_n-k w_n&=&\beta(1+\e^{v_0+w})^m(\e^{v_0+w}-1)-kw_{n-1}+g,\label{3.9}\\
w_n&\to& 0~~\text{as}~~|x|\to\infty,~~n=2,3,\ldots,\label{3.10}\\
w_1&=&w_+,\label{3.11}
\eea
where $k>2^m\beta$. When taking $n=2$ in \eqref{3.9}, the definition of $w_1$ tells us that the right--hand side of \eqref{3.9} is an $L^2$--function on $\mathbb{R}^2$. Then the $L^2$--theory of elliptic equations imply $w_2\in W^{2,2}(\mathbb{R}^2)$. Particularly, $w_2\to 0$ as $|x|\to\infty$. By induction, we see $w_n\in W^{2,2}(\mathbb{R}^2)$ for all $n\geq 2$.

We next show that the sequence $\{w_n\}$ defined by \eqref{3.9}--\eqref{3.11} enjoys the monotone property
\be\label{3.12}
w_-<\cdots<w_n<\cdots<w_2<w_1=w_+.
\ee
Clearly, there holds $w_2<w_1=w_+$ in the neighborhood of $P=\{p_1,p_2,\ldots,p_N\}$, where we used the fact that $w_2$ is bounded. In $\mathbb{R}^2\backslash P$, we have $\triangle(w_2-w_1)\geq k(w_2-w_1)$. Thus, the maximum principle gives us $w_2<w_1$  over $\mathbb{R}^2$. Besides, we obtain from $w_-<w_+$ and $k>2^m\beta$ that
\bea\nn
(\triangle-k)(w_--w_2)&=&\left(\beta\e^{v_0+\xi}(1+\e^{v_0+\xi})^m\left(1+\frac{m(\e^{v_0+\xi}-1)}{1+\e^{v_0+\xi}}\right)-k\right)(w_--w_+)\nn\\
&\geq&(2^m\beta-k)(w_--w_+)\geq 0,~~\xi\in(w_-, w_+).\nn
\eea
Using the maximum principle again, we get $w_-<w_2$.

In general, we may assume that $w_-<w_n$ and $w_n<w_{n-1}$ for some $n\geq 2$. Then, there exists some $\xi\in(w_{n-1}, w_n)$ such that
\be\nn
(\triangle-k)(w_{n+1}-w_n)=\left(\beta\e^{v_0+\xi}(1+\e^{v_0+\xi})^m\left(1+\frac{m(\e^{v_0+\xi}-1)}{1+\e^{v_0+\xi}}\right)-k\right)(w_--w_+)\geq 0.
\ee
Hence, $w_{n+1}<w_n$. In addition, for some $\xi\in(w_-, w_n)$, we have
\be\nn
(\triangle-k)(w_--w_{n+1})=\left(\beta\e^{v_0+\xi}(1+\e^{v_0+\xi})^m\left(1+\frac{m(\e^{v_0+\xi}-1)}{1+\e^{v_0+\xi}}\right)-k\right)(w_--w_n)\geq 0.
\ee
Thus, $w_-<w_{n+1}$. Consequently, the general monotone property \eqref{3.12} holds.

Using \eqref{3.12} and a standard bootstrap argument, we see that the sequence $\{w_n\}$ converges in $W^{2,2}(\mathbb{R}^2)$. Of course, the limit $w$ is in $W^{2,2}(\mathbb{R}^2)$, and it is a classical solution of equation \eqref{3.8}.

Returning to the original variable, we get a solution $v$ of the equation \eqref{1.35} for $m>0$ which vanishes at infinity.


Now, we pay attention to show the decay estimate of the solution $v(x)$ at infinity. The linearization of \eqref{1.35} near $v=0$ in a neighborhood of infinity is
\be\nn
\triangle v=f(x)v,
\ee
where $f(x)\to 2^m\beta$ as $|x|\to\infty$. Hence, for any $\vep\in(0, 1)$, there are some constants $C(\vep)>0$ and $R(\vep)>0$ such that
\be\label{3.13}
|v(x)|\leq C(\vep)\e^{-(1-\vep)\sqrt{2^{m}\beta}|x|},~~|x|>R(\vep),
\ee
which implies that $v$ is an $L^2$--function. If we neglect a local region, then the right--hand side of \eqref{1.35} is $L^2$, use the $L^2$--estimates in \eqref{1.35}, we see that $v\in W^{2,2}$. Fix $j=1, 2$ and set $V=\pa_j v$, differentiating \eqref{1.35} gives us
\be\label{3.14}
\triangle V=\beta m(1+\e^v)^{m-1}\e^v(\e^v-1)V+\beta (1+\e^v)^m \e^v V
\ee
away from a sufficiently large local neighborhood. Applying $v\in W^{2,2}$ and the $L^2$--estimates in \eqref{3.14}, we easily get $V\in W^{2,2}$. In particular, $V\to 0$ as $|x|\to\infty$. Asymptotical, $V$ satisfies $\triangle V=f(x)V$ where $f(x)\to 2^m\beta$ as $|x|\to\infty$. Hence $V$ also verifies the estimate \eqref{3.13} as expected.

\end{proof}

\subsection{Existence of strings}

In this subsection, we will find a solution of \eqref{1.36} satisfies $m>0$ and the boundary condition \eqref{1.29}. Since the right--hand side of \eqref{1.36} involves singularities of a power type which may cause a major obstacle. It becomes necessary to consider the orders of singularities at the strings in a few separate cases. When there are at least two distinct string centers, we use the technique of monotone iteration. While, fixed point theorem is employed when all the points among the set of centers of strings coincide.

Our existence result for multiple strings reads as follows.

\begin{theorem}\label{th3.2}
Suppose $4\pi G N\leq1$ but there are at least two distant string centers and $m>0$, then the equation \eqref{1.36} has a negative solution which vanishes at infinity.
\end{theorem}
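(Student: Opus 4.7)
The strategy is to adapt the substitution-plus-monotone-iteration scheme from the proof of Theorem~\ref{th3.1}, wrapped inside a $\delta$-regularization of the singular conformal prefactor $\prod_s|x-p_s|^{-8\pi G}$ produced by \eqref{1.30}. Writing $v=v_0+w$ with $v_0,g$ as in \eqref{3.6}--\eqref{3.7} turns \eqref{1.36} into
\begin{equation*}
\triangle w=g+\beta\bigl(\mathrm{e}^{v_0+w-\mathrm{e}^{v_0+w}}\bigr)^{4\pi G}\prod_{s=1}^N|x-p_s|^{-8\pi G}(1+\mathrm{e}^{v_0+w})^m(\mathrm{e}^{v_0+w}-1).
\end{equation*}
The decisive observation is a singularity-cancellation bookkeeping: near a zero $p_j$ of multiplicity $k_j$ one has $\mathrm{e}^{v_0}\sim|x-p_j|^{2k_j}$, so the factor $\mathrm{e}^{4\pi G(v_0+w)}$ produces $|x-p_j|^{8\pi G k_j}$ and locally balances the blow-up $|x-p_j|^{-8\pi G k_j}$ of the prefactor. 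The assumption of at least two distinct centers forces $k_j\le N-1$ for every $j$, which combined with $4\pi G N\le 1$ yields $8\pi G k_j<2$, so the full right-hand side is locally $L^p$ for some $p>1$ as soon as $w$ is locally bounded.

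I would then smooth the singular prefactor, for instance replacing $|x-p_s|^{-8\pi G}$ with $(|x-p_s|^2+\delta)^{-4\pi G}$ for $\delta\in(0,1]$, to obtain a smooth-coefficient equation for which the iteration \eqref{3.9}--\eqref{3.11} applies essentially verbatim. The natural supersolution is $w_+=-v_0$ (i.e.~$v\equiv 0$), which works because $(\mathrm{e}^v-1)|_{v=0}=0$ and $-\triangle v_0=g-4\pi\sum_s\delta_{p_s}$; an appropriate subsolution $w^\delta_-$ is built from a Taubes solution of \eqref{3.3} with a suitably chosen coupling constant, modified so as to dominate the smoothed right-hand side. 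Choosing the shift $k$ in \eqref{3.9} above the $w$-Lipschitz bound of the smoothed nonlinearity and running the iteration produces a classical solution $w^\delta\in W^{2,2}(\mathbb{R}^2)$ with $w^\delta_-\le w^\delta\le w_+$ and $w^\delta\to 0$ at infinity.

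Finally I pass to the limit $\delta\to 0$. The $L^\infty$-bracket on $w^\delta$ together with the $L^p_{\mathrm{loc}}$-domination from the bookkeeping above provides $\delta$-uniform $W^{2,p}_{\mathrm{loc}}$-compactness through $L^p$-elliptic estimates and a bootstrap; along a subsequence $w^\delta\to w$ in $C^0_{\mathrm{loc}}$, and dominated convergence transfers the equation to the limit. Reverting to $v=v_0+w$ produces a negative solution of \eqref{1.36} vanishing at infinity, with negativity inherited from the same maximum-principle argument as in Theorem~\ref{th3.1} and the decay obtained by asymptotic linearization. The main obstacle is the $\delta$-uniform construction of the subsolution, because the regularized coefficient is neither bounded above nor below by a $\delta$-independent constant: one must marry the regularization with the cancellation $\mathrm{e}^{4\pi G(v_0+w)}\cdot(|x-p_s|^2+\delta)^{-4\pi G}$ to produce a subsolution that stays below $v\equiv 0$ uniformly in $\delta$, and it is precisely here that the two-distinct-centers hypothesis enters quantitatively via $k_j\le N-1$ to close the limit passage.
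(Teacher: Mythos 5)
Your global architecture---regularize the singular conformal prefactor, bracket the regularized equation between sub- and supersolutions, solve it, then pass to the $\delta\to0$ limit via local $L^p$ elliptic estimates---coincides with the paper's, which regularizes both the prefactor and the Dirac sources in \eqref{3.15} and invokes Ni's sub/supersolution lemma \cite{Ni} rather than your $v=v_0+w$ monotone iteration (a minor difference). Your supersolution $v\equiv0$ is the paper's Lemma \ref{d.1}, and your bookkeeping $8\pi G k_j<2$ via $k_j\le N-1$ is exactly how the paper secures $\prod_s|x-p_s|^{-8\pi G}\in L^p_{loc}$, $p>1$, in the limit passage. But there is a genuine gap at precisely the point you flag as ``the main obstacle'': the $\delta$-uniform subsolution is never produced, and the route you sketch---built from a Taubes solution of \eqref{3.3} with adjusted coupling---would fail. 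Since $\e^{v}-1<0$, a Taubes profile $v_{\beta'}$ is a subsolution only if $\beta'(\e^{v_{\beta'}}-1)\geq \beta\,Q_\delta(x)\,\e^{4\pi G(v_{\beta'}-\e^{v_{\beta'}})}(1+\e^{v_{\beta'}})^m(\e^{v_{\beta'}}-1)$, i.e.\ only if $\beta'\leq \beta\,Q_\delta(x)\cdot(\text{bounded factor})$ pointwise, where $Q_\delta(x)=\prod_s(\delta+|x-p_s|^2)^{-4\pi G}$ decays like $|x|^{-8\pi GN}$ at infinity; no $\beta'>0$ survives outside a compact set. The real obstruction lives at spatial infinity, not near the centers: string solutions decay only polynomially (Lemma \ref{d.3}), so any viable subsolution must have polynomial tails, and an exponentially decaying vortex profile cannot serve.

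The paper's missing ingredient is explicit: $v_\delta=\sum_s\ln\bigl((\delta+|x-p_s|^2)/(1+|x-p_s|^2)\bigr)$ of \eqref{3.16}, whose Laplacian reproduces the smoothed sources exactly up to the background density $g$ of \eqref{3.6}, so that the subsolution inequality \eqref{3.17} reduces to the tail comparison of Lemma \ref{d.2}: the nonlinear term supplies $h_\delta\sim c\,r^{-2-8\pi GN}$ against $g\sim 4N r^{-4}$, which closes exactly when $4\pi GN\leq1$, and only after taking $\beta\geq C_0$ large uniformly in $\delta$ (legitimate because $\lambda$ is a free parameter---a largeness condition on $\beta$ your proposal never registers). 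This also corrects your attribution of the hypotheses: $4\pi GN\leq1$ does its principal work in the subsolution at infinity, while the two-distinct-centers assumption enters only in the local limit passage through $k_j\leq N-1$. Your cancellation observation $\e^{4\pi G v_0}\prod_s|x-p_s|^{-8\pi G}=O(1)$ near each $p_j$ is a nice structural remark (the paper instead uses the cruder bound $\e^{4\pi G(v-\e^v)}\leq\e^{-4\pi G}$ together with the $L^p_{loc}$ property), but it addresses a point that is not the bottleneck and cannot substitute for Lemma \ref{d.2}; without a concrete $\delta$-uniform subsolution the construction does not close.
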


We consider a regularized version of the equation \eqref{1.36},
\be\label{3.15}
\triangle v\!=\!\beta\left(\e^{v-\e^v}\prod_{s=1}^N(\delta+|x-p_s|^2)^{-1}\right)^{4\pi G}\!(1+\e^v)^m(\e^v-1)\!+\!\sum_{s=1}^N\frac{4\delta}{(\delta+|x-p_s|^2)^2}, m>0,
\ee
where $\delta\in(0, 1)$ is a parameter. It is worth noting that the original equation \eqref{1.36} for $m>0$ can be recovered from \eqref{3.15} by taking the $\delta\to 0$ limit.

\begin{lemma}\label{d.1}
The function $v_+=0$ is a supersolution of \eqref{3.15}.
\end{lemma}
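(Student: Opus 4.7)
The plan is to verify the defining supersolution inequality directly by substitution. Following the convention already used in the proof of Theorem~\ref{th3.1} (where $\hat v=0$ served as a supersolution of \eqref{1.35} exactly because $0=\triangle\hat v\le$ RHS$(\hat v)$), I need to show that
\[
\triangle v_+ \;\le\; \beta\!\left(\e^{v_+-\e^{v_+}}\prod_{s=1}^N(\delta+|x-p_s|^2)^{-1}\right)^{\!4\pi G}\!(1+\e^{v_+})^m(\e^{v_+}-1)\;+\;\sum_{s=1}^N\frac{4\delta}{(\delta+|x-p_s|^2)^2}
\]
in the classical pointwise sense on $\mathbb{R}^2$, since with the parameter $\delta>0$ the right-hand side is smooth.

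First I would compute the left-hand side: since $v_+\equiv 0$, one has $\triangle v_+=0$ identically. Next I would evaluate the right-hand side at $v_+=0$. The key observation is that the nonlinear term carries the factor $\e^{v_+}-1=\e^0-1=0$, so the entire first term on the right-hand side vanishes pointwise, regardless of the (bounded and smooth) values of the remaining factors $\e^{v_+-\e^{v_+}}$ and $(1+\e^{v_+})^m$. What remains is exactly the regularizing source
\[
\sum_{s=1}^N\frac{4\delta}{(\delta+|x-p_s|^2)^2},
\]
which is manifestly non-negative on $\mathbb{R}^2$ because $\delta\in(0,1)$ and each denominator is strictly positive. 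Hence $0\le$ RHS at $v_+=0$, which is precisely the supersolution inequality.

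There is no genuine obstacle here; the lemma is essentially a one-line verification, and its role is preparatory for the monotone iteration in the next step. The only thing worth emphasizing in the write-up is that the regularization has removed the Dirac masses at the $p_s$'s in favor of the smooth, nonnegative source $\sum_s 4\delta/(\delta+|x-p_s|^2)^2$, so the inequality holds in the classical (not merely distributional) sense, which is what makes $v_+=0$ a usable upper barrier for the iteration that produces a solution of \eqref{3.15} vanishing at infinity before one passes to the limit $\delta\to 0$.
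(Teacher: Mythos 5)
Your verification is correct and coincides with the paper's own proof, which is the same one-line substitution: since $\e^{v_+}-1=0$ kills the nonlinear term while the regularized source $\sum_{s=1}^N 4\delta/(\delta+|x-p_s|^2)^2$ is strictly positive, one gets $0=\triangle v_+<\mathrm{RHS}$ pointwise. Your added remarks (smoothness of the right-hand side for $\delta>0$, the inequality holding classically rather than distributionally) are accurate glosses on the same argument, not a different route.
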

\begin{proof}
In fact, it is clear that
\be\nn
0=\triangle v_+<\beta\left(\e^{v_+-\e^{v_+}}\prod_{s=1}^N(\delta+|x-p_s|^2)^{-1}\right)^{4\pi G}(1+\e^{v_+})^m(\e^{v_+}-1)+\sum_{s=1}^N\frac{4\delta}{(\delta+|x-p_s|^2)^2}.
\ee
Thus, $v_+=0$ is a supersolution as expected.
\end{proof}

Define
\be\label{3.16}
v_\delta=\sum_{s=1}^N\ln\left(\frac{\delta+|x-p_s|^2}{1+|x-p_s|^2}\right).
\ee
Then $v_0<v_\delta<0$, $v_\delta\to 0$ as $|x|\to\infty$ and
\be\nn
\triangle v_\delta=\sum_{s=1}^N\frac{4\delta}{(\delta+|x-p_s|^2)^2}-g,
\ee
where $g$ is defined as in \eqref{3.6}.

\begin{lemma}\label{d.2}
If $4\pi G N\leq 1$, then there is a constant $C_0>0$ independent of $0<\delta<\frac 12$ (say) such that
\be\label{3.17}
0>\beta\left(\e^{v_\delta-\e^{v_\delta}}\prod_{s=1}^N(\delta+|x-p_s|^2)^{-1}\right)^{4\pi G}(1+\e^{v_\delta})^m(\e^{v_\delta}-1)+g
\ee
holds whenever $\beta>C_0$. In other worlds, $v_-=v_\delta$ is a subsolution of \eqref{3.15} for all $\delta$.
\end{lemma}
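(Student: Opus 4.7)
The plan is to recast \eqref{3.17} as a pointwise positivity statement and exhibit a uniform-in-$\delta$ constant $C_0$ such that $\beta>C_0$ forces it. Since $v_\delta<0$ on $\mathbb{R}^2$, one has $1-e^{v_\delta}>0$, and \eqref{3.17} is equivalent to
\[
\beta\,A(x,\delta)>g(x),\qquad A(x,\delta):=\left(e^{v_\delta-e^{v_\delta}}\prod_{s=1}^N(\delta+|x-p_s|^2)^{-1}\right)^{4\pi G}(1+e^{v_\delta})^m(1-e^{v_\delta}),
\]
so it suffices to prove that $C_0:=\sup_{x\in\mathbb{R}^2,\ \delta\in(0,1/2)}g(x)/A(x,\delta)$ is finite.

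The decisive cancellation comes from the explicit form \eqref{3.16} of $v_\delta$, which gives the identity
\[
e^{v_\delta}\prod_{s=1}^N(\delta+|x-p_s|^2)^{-1}=\prod_{s=1}^N(1+|x-p_s|^2)^{-1},
\]
removing the singular $\delta$-dependent factor entirely. Combined with the elementary bounds $e^{-e^{v_\delta}}\geq e^{-1}$ (as $0<e^{v_\delta}<1$) and $(1+e^{v_\delta})^m\geq 1$ for $m>0$, this yields
\[
A(x,\delta)\geq e^{-4\pi G}\Bigl(\prod_{s=1}^N(1+|x-p_s|^2)^{-1}\Bigr)^{4\pi G}(1-e^{v_\delta}).
\]
To bound $1-e^{v_\delta}$ from below, set $\alpha_s=(\delta+|x-p_s|^2)/(1+|x-p_s|^2)\in(0,1)$; since $e^{v_\delta}=\prod_s\alpha_s\leq\min_s\alpha_s$, I get $1-e^{v_\delta}\geq\max_s(1-\alpha_s)=\max_s(1-\delta)/(1+|x-p_s|^2)\geq 1/[2(1+\min_s|x-p_s|^2)]$ uniformly for $\delta<1/2$.

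Pairing these bounds with $g(x)\leq 4N/(1+\min_s|x-p_s|^2)^2$ produces
\[
\frac{g(x)}{A(x,\delta)}\leq 8N\,e^{4\pi G}\,\frac{\prod_{s=1}^N(1+|x-p_s|^2)^{4\pi G}}{1+\min_s|x-p_s|^2},
\]
which is continuous on $\mathbb{R}^2$ and entirely independent of $\delta$. As $|x|\to\infty$ every $|x-p_s|$ is comparable to $|x|$, so the numerator grows like $|x|^{8\pi GN}$ and the denominator like $|x|^2$; the hypothesis $4\pi GN\leq 1$ is exactly what makes $|x|^{8\pi GN-2}$ bounded at infinity. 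Hence the supremum is finite, and taking $C_0$ equal to it finishes the argument.

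The main obstacle is the uniformity with respect to $\delta$: a direct estimate would see the factor $\prod_s(\delta+|x-p_s|^2)^{-1}$ blow up like $\delta^{-1}$ at each center while $e^{v_\delta}$ simultaneously degenerates like $\delta$ there, and any approach that treats these two competing behaviors separately will fail to close uniformly. The identity collapsing $e^{v_\delta}\prod_s(\delta+|x-p_s|^2)^{-1}$ into a $\delta$-free quantity is what makes the uniform control possible, and the asymptotic threshold $4\pi GN\leq 1$ then emerges as the sharp condition ensuring that the vanishing of $1-e^{v_\delta}$ and the decay of $\prod_s(1+|x-p_s|^2)^{-4\pi G}$ together dominate the $|x|^{-4}$ decay of $g$.
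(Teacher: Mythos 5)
Your proposal is correct, and it reaches the conclusion by a genuinely more streamlined route than the paper. The paper splits the plane into two regions: for $|x|\geq r_0$ it expands $\e^{v_\delta}-1$ and $v_\delta$ via the mean value theorem to produce a lower bound $h_\delta$ with $r^4h_\delta$ bounded below by a positive quantity while $r^4g$ stays bounded (this is where $4\pi GN\leq 1$ enters), and on the compact complement $\Omega$ it uses the monotonicity $v_\delta<v_{1/2}$ together with exactly your cancellation identity $\e^{4\pi Gv_\delta}\prod_{s=1}^N(\delta+|x-p_s|^2)^{-4\pi G}=\prod_{s=1}^N(1+|x-p_s|^2)^{-4\pi G}$ to obtain a $\delta$-independent strictly negative bound there. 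You instead deploy that identity globally and replace the paper's mean-value bookkeeping with the elementary estimate $1-\e^{v_\delta}\geq\max_s(1-\alpha_s)\geq\tfrac12\bigl(1+\min_s|x-p_s|^2\bigr)^{-1}$, valid uniformly for $\delta<\tfrac12$, which collapses the whole argument into a single continuous, $\delta$-free majorant of $g/A$; its boundedness at infinity, where it behaves like $|x|^{8\pi GN-2}$, is precisely the hypothesis $4\pi GN\leq 1$. Your route buys three things: no case split, no mean-value factors $(1+\xi_s)^{-1}$ (which in the paper are only controlled away from the centers, forcing the separate compact-region patch), and an explicit admissible constant $C_0=8N\e^{4\pi G}\sup_{x}\bigl[(1+\min_s|x-p_s|^2)^{-1}\prod_s(1+|x-p_s|^2)^{4\pi G}\bigr]$. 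What the paper's version buys is that its far-field computation makes the borderline behavior visible — when $4\pi GN=1$ one sees $r^4h_\delta$ tending to a finite positive limit — so the sharpness of the threshold is displayed asymptotically rather than absorbed into a supremum. All the individual steps of your argument check out ($v_\delta<0$, $(1+\e^{v_\delta})^m\geq 1$ for the $m>0$ regime of this section, $\e^{-\e^{v_\delta}}\geq\e^{-1}$, and strictness of \eqref{3.17} since $\beta>C_0$ with $A>0$ pointwise), so this is a complete and correct proof.
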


\begin{proof}
The function $v_\delta$ can be rewrite as
\be\label{3.18}
v_\delta=-\sum_{s=1}^N\ln\left(1+\frac{1-\delta}{\delta+|x-p_s|^2}\right),
\ee
which implies that $v_\delta\to 0$ uniformly as $|x|\to\infty$. Note that
\be\nn
\e^{v_\delta}-1=\e^{\xi(v_\delta)}v_\delta,~~v_\delta<\xi(v_\delta)<0.
\ee
Thus $\e^{\xi(v_\delta)v_\delta}\to 1$ uniformly as $|x|\to\infty$. Hence
\bea\label{3.19}
&&\left(\e^{v_\delta-\e^{v_\delta}}\prod_{s=1}^N(\delta+|x-p_s|^2)^{-1}\right)^{4\pi G}(1+\e^{v_\delta})^m(\e^{v_\delta}-1)\nn\\
&\leq&\e^{4\pi G (v_\delta-1)}\prod_{s=1}^N(\delta+|x-p_s|^2)^{-4\pi G}(\e^{v_\delta}-1)\nn\\
&\leq&-\e^{4\pi G (v_\delta-1)+\xi(v_\delta)}\left(\sum_{s=1}^N\ln\left(1+\frac{1-\delta}{\delta+|x-p_s|^2}\right)\right)\prod_{s=1}^N(\delta+|x-p_s|^2)^{-4\pi G}\nn\\
&\leq&-\e^{4\pi G (v_\delta-1)+\xi(v_\delta)}\left(\sum_{s=1}^N\frac{1}{1+\xi_s}\frac{1-\delta}{\delta+|x-p_s|^2}\right)\prod_{s=1}^N(1+|x-p_s|^2)^{-4\pi G}\nn\\
&=&-h_\delta(x),
\eea
where $\xi_s\in\left(0, \frac{1-\delta}{\delta+|x-p_s|^2}\right), s=1,2,\ldots,N$.

Set $r=|x|$, according to the assumption $4\pi G N\leq 1$, we have
\be\nn
r^4 h_\delta(x)\rightarrow\infty~~\text{uniformly}~~~(\text{if}~4\pi G N< 1)~~~\text{as}~~r=|x|\rightarrow\infty
\ee
or
\be\nn
r^4 h_\delta(x)\rightarrow \text{some number}~c_\delta>0~~~(\text{if}~4\pi G N= 1)~~~\text{as}~~r=|x|\rightarrow\infty
\ee
for $0<\delta<\frac12$. Therefore, the above argument allows us to conclude that there are suitable $C_0$ and $r_0$ so that
\be\label{3.20}
-\frac{1}{r^4}(\beta r^4h_\delta-r^4g)<0,~~\forall r=|x|\geq r_0,~~\beta\geq C_0.
\ee
This shows that \eqref{3.17} holds for $r=|x|\geq r_0, \beta\geq C_0$.

On the other hand, we may choose $r_0$ sufficiently large, such that
\be\nn
P=\{p_1, p_2, \cdots, p_N\}\subset\left\{x\mid|x|<r_0\right\}\equiv \Omega.
\ee
We see by the definition of $v_\delta$ that $v_\delta<v_{1/2}~(\delta\leq\frac 12)$, thus
\be\nn
(1+\e^{v_\delta})^m(\e^{v_\delta}-1)\leq(\e^{v_{1/2}}-1),
\ee
\be\nn
\e^{4\pi G(v_\delta-\e^{v_\delta})}\prod_{s=1}^N(\delta+|x-p_s|^2)^{-4\pi G}\geq\e^{-4\pi G}\prod_{s=1}^N(1+|x-p_s|^2)^{-4\pi G}.
\ee
Hence, we can choose $C_0$ sufficiently large so that \eqref{3.17} holds on $\Omega$ as well.
\end{proof}

Since $v=0$ is a supersolution and $v_\delta<0$ is a subsolution, we can use the lemma established by Ni \cite{Ni} to derive a solution $v(x; \delta)$ of \eqref{3.15} in $\mathbb{R}^2$ satisfying
\be\label{3.21}
v_\delta\leq v(x; \delta)\leq 0,~~0<\delta<\frac 12,
\ee
and
\bea\label{3.22}
\triangle v(x; \delta)=&&\beta\left(\e^{v(x; \delta)-\e^{v(x; \delta)}}\prod_{s=1}^N(\delta+|x-p_s|^2)^{-1}\right)^{4\pi G}(\e^{v(x; \delta)}-1)\nn\\
&&+4\pi\sum_{s=1}^N\frac{\delta}{(\delta+|x-p_s|^2)^2}.
\eea
Hence, by a diagonal subsequence argument and the elliptic bootstrap iteration we can pick a sequence $\{\delta_n\}, \delta_n\to 0$ as $n\to\infty$, such that $\{v(x; \delta_n)\}$ convergence to a function $v\in C^\infty(\mathbb{R}^2\backslash P)$ in any $C^m(D)$ norm for each arbitrarily given compact subset $D$ of $\mathbb{R}^2\backslash P$ and $m\in\mathbb{N}$.

Furthermore, using the Green function to rewrite \eqref{3.22} with $\delta=\delta_n$ over an open disk containing $P$ in an integral form and setting $n\to\infty$. In view of the condition $4\pi G N\leq 1$, we see that the polynomial term on the right--hand side of \eqref{3.22} enjoys the property
\be\nn
\prod_{s=1}^N(|x-p_s|^{-8\pi G})=\left(\prod_{s=1}^N(\delta+|x-p_s|^2)^{-4\pi G}\right)\Bigg|_{\delta=0}\in L^p_{loc}(\mathbb{R}^2),~~p>1.
\ee
Letting $n\to\infty$, we can show that $v$ is a solution of the equation \eqref{1.36}. In particular, $v$ vanishes at infinity. Furthermore,
the last term in \eqref{1.36} can not be zero for all $x\in\mathbb{R}^2$, combine this fact with a maximum principle we conclude that
the solution $v$ we get from the above argument is negative on the full plane. Consequently, the proof of Theorem \ref{th3.2} is complete.

Now, we pay attention to the case that all single center of the $N$ strings is at the same point which is not covered in the restrictive condition stated in the Theorem \ref{th3.2}. Here is our existence result.

\begin{theorem}\label{th3.3}
Under the condition $4\pi G N=1$ but $N\geq1$ and $m>0$, if all the points $p_s$ are identical, that is to say, $p_s=p_0, s=1,2,\cdots, N$. Then the equation \eqref{1.36} has a symmetric solution that vanishes at infinity with the choice
\be\nn
\beta=\frac{2 N^2}{\int_{-\infty}^0 \e^{4\pi G(v-\e^v)}(1+\e^v)^m(1-\e^v)\dd v}.
\ee
\end{theorem}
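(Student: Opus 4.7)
The plan is to exploit the radial symmetry forced by the collapse of all string centers to one point $p_0$, which without loss of generality we take to be the origin. Under $4\pi G N = 1$, the weight $\left(\prod_{s=1}^{N}|x-p_s|^{-2}\right)^{4\pi G}$ simplifies to $|x|^{-2}$, so for a radial $v=v(r)$ equation \eqref{1.36} reads, away from the origin,
\begin{equation*}
v_{rr} + \frac{1}{r}v_r = \beta r^{-2}\e^{4\pi G(v - \e^v)}(1 + \e^v)^m(\e^v - 1).
\end{equation*}
The change of variable $t = \ln r$ turns the radial Laplacian into $r^{-2}v_{tt}$, cancelling $r^{-2}$ and yielding the autonomous second--order ODE
\begin{equation*}
v''(t) = \beta\,\e^{4\pi G(v - \e^v)}(1+\e^v)^m(\e^v - 1),\qquad t\in\mathbb{R},
\end{equation*}
subject to $v\to 0$ as $t\to +\infty$ (from \eqref{1.29}) and $v(t) - 2Nt\to 0$, equivalently $v'(t)\to 2N$, as $t\to -\infty$, so that the logarithmic asymptotics $v\sim 2N\ln|x|$ at the origin produce the distributional source $4\pi N\delta_0 = 4\pi\sum_{s=1}^N\delta_{p_s}$.

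I would then multiply by $v'$ and integrate to obtain the first integral
\begin{equation*}
\tfrac{1}{2}(v')^2 = \beta\,\Phi(v), \qquad \Phi(v) := \int_v^0\e^{4\pi G(s-\e^s)}(1+\e^s)^m(1-\e^s)\,\dd s,
\end{equation*}
with the integration constant pinned down by $v,v'\to 0$ at $t=+\infty$. Evaluating this identity at $t=-\infty$, where $v\to -\infty$ and $v'\to 2N$, yields $4N^2 = 2\beta\,\Phi(-\infty)$, which is exactly the value of $\beta$ stated in the theorem; the admissible coupling is thus uniquely forced by the simultaneous matching of the two boundary behaviors.

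Conversely, with this $\beta$ the boundary value problem reduces to the separable first--order ODE $v'=\sqrt{2\beta\,\Phi(v)}$ on $(-\infty,0)$, integrated implicitly as $t - t_0 = \int_{v_0}^{v}[2\beta\,\Phi(s)]^{-1/2}\,\dd s$. Two endpoint checks are required: near $v=0^-$ a Taylor expansion gives $\Phi(v)\sim 2^{m-1}\e^{-4\pi G}v^2$, so the integrand behaves like $|v|^{-1}$ and its antiderivative diverges, forcing $t\to+\infty$; as $v\to-\infty$, $\Phi(v)\to\Phi(-\infty)>0$ so the integrand tends to $1/(2N)$ and the improper integral again diverges, forcing $t\to-\infty$. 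Since $\Phi>0$ throughout $(-\infty,0)$, the map $t(v)$ is a strictly increasing bijection $(-\infty,0)\to\mathbb{R}$, and its inverse is the desired monotone radial profile. Reversing the substitutions produces a smooth, negative $v(x)=v(\ln|x|)$ on $\mathbb{R}^2\setminus\{0\}$ that vanishes at infinity, and whose logarithmic singularity at the origin generates exactly the Dirac source in \eqref{1.36}.

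The main obstacle is the simultaneous matching of the two asymptotic conditions, which is what couples the explicit value of $\beta$ to the geometry. The saturation $4\pi G N = 1$ is essential: without it, the factor $|x|^{-8\pi G N}$ would not balance the radial Laplacian under the $t$--substitution and the autonomous structure on which the first--integral argument rests would break down. Beyond this, the only nontrivial technical verification is the Taylor behavior $\Phi(v)\sim 2^{m-1}\e^{-4\pi G}v^2$ near $v=0$, which guarantees that the profile $v(t)$ extends to all of $\mathbb{R}$ without finite--$t$ singularity.
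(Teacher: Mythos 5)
Your proposal is correct, and it reaches the paper's conclusion by a genuinely more direct route. The paper first converts \eqref{3.27}--\eqref{3.28} into the integral equation \eqref{3.30} and runs a contraction-mapping argument on a weighted space near $t=-\infty$ to produce a local solution with the prescribed asymptotics, extends it by continuation, and only \emph{then} invokes the first integral \eqref{3.35}, fixing $\beta$ as in \eqref{3.38} through the equilibrium conditions $2N^2=H(\tilde v)$, $H'(\tilde v)=0$, and finally rules out $v(t_0)=0$ and $v_\infty<0$ by two contradiction arguments based on \eqref{3.40}--\eqref{3.41}. You instead take the first integral as the starting point and construct the heteroclinic profile outright by quadrature of $v'=\sqrt{2\beta\Phi(v)}$: since $\Phi>0$ on $(-\infty,0)$ and differentiating the energy identity recovers the second-order equation wherever $v'\neq 0$, your two endpoint divergence checks (the integrand $\sim|v|^{-1}$ near $v=0^-$ from $\Phi(v)\sim 2^{m-1}\e^{-4\pi G}v^2$, which matches the paper's constant in \eqref{3.40}, and the integrand $\to 1/(2N)$ as $v\to-\infty$) simultaneously replace the fixed-point construction and both of the paper's contradiction arguments, while the matching condition $4N^2=2\beta\,\Phi(-\infty)$ is the same balance that yields \eqref{3.38}. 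What your route buys is explicitness and economy --- monotonicity, negativity, and both asymptotics come in one stroke with no function-space machinery; what the paper's route buys is robustness, since the contraction argument near $t=-\infty$ does not depend on the autonomous, separable structure and so survives perturbations where quadrature is unavailable. Two small points to tidy: your boundary condition ``$v(t)-2Nt\to 0$'' is stronger than needed and is only achieved up to an additive constant by quadrature, but this is harmless --- the autonomous flow is translation-invariant in $t$, and the Dirac mass $4\pi N\delta_{p_0}$ only requires $v/t\to 2N$ and $v'\to 2N$, exactly the paper's \eqref{3.28}; and your remark that $\beta$ is ``uniquely forced'' (the necessity direction) implicitly uses $v'\to 0$ as $t\to+\infty$, which deserves the one-line justification that the right-hand side of \eqref{3.27} vanishes as $v\to 0$, so $v''\to 0$ and the convergence of $v$ forces $v'\to 0$ --- though since the theorem asserts only existence, this direction is optional.
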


\begin{proof}
Without loss of generality, we assume that $p_s=p_0, s=1,2,\cdots, N$ coincide at the origin of $\mathbb{R}^2$. For simplicity, set
\be\label{3.23}
a=4\pi G
\ee
and use $r=|x|$. Then \eqref{1.36} for $m>0$ can be written as
\be\label{3.24}
\triangle v=\beta r^{-2aN}\e^{a(v-e^v)}(1+\e^v)^m(\e^v-1)+4\pi N\delta(x).
\ee
According to the removable singularity theorem, we come up with an equivalent form of the equation \eqref{3.24},
\bea
(rv_r)_r&=&\beta r^{1-2aN}\e^{a(v-e^v)}(1+\e^v)^m(\e^v-1),\label{3.25}\\
\lim_{r\to 0}\frac{v(r)}{\ln r}&=&\lim_{r\to 0}r v_r(r)=2N.\label{3.26}
\eea

We now introduce the new variable $t=\ln r$ and use the condition $aN=1$ to rewrite the system \eqref{3.25}--\eqref{3.26} as
\bea
v''&=&\beta\e^{a(v-e^v)}(1+\e^v)^m(\e^v-1),~~-\infty<t<\infty,\label{3.27}\\
\lim_{t\to -\infty}\frac{v(t)}{t}&=&\lim_{t\to -\infty}v'(t)=2N,\label{3.28}
\eea
where $v'$ is the derivative of $v$ with respect to $t$.
Denote the right--hand side of the equation \eqref{3.27} as $h(v)$ and multiply the both side of \eqref{3.27} by $v'$, then the system \eqref{3.27}--\eqref{3.28} is recast into the integral form
\be\label{3.29}
v(t)=2Nt+\int_{-\infty}^t(t-\tau)h(v(\tau))\dd \tau,~~~t\in\mathbb{R}.
\ee
For convenient, set $w=v-2Nt$, so we can rewrite \eqref{3.29} as
\be\label{3.30}
w(t)=\int_{-\infty}^t(t-\tau)h(2N\tau+w(\tau))\dd \tau,~~~t\in\mathbb{R}.
\ee
We will show that the equation \eqref{3.30} has a solution which vanishes at $t=-\infty$. For this purpose, denote the right--hand side of the equation \eqref{3.30} as $T(w)$, then we arrive at a fixed--point problem, $w=T(w)$. Define the following function space
\be\nn
\mathcal{X}=\left\{w\in C(-\infty, t_0]~\Big|~\lim_{t\to -\infty}w(t)=0,~~\sup_{t\leq t_0}|w(t)|\leq 1\right\},
\ee
where $-\infty<t_0<\infty$ is to be determined later. First, we show that $T$ maps from $\mathcal{X}$ into itself by choosing a proper $t_0$. In fact, for any $w\in\mathcal{X}$, we have
\be\label{3.31}
|T(w)|\leq\beta\sup_{t\leq t_0}\int_{-\infty}^t|t-\tau|\e^{a(2N\tau+1)}(1+\e^{2N\tau+1})^m\dd \tau \leq 1
\ee
holds for some suitable $t_0$.
Next, we show that $T$ is a contraction operator. To achieve this goal, differentiating the right--hand side of \eqref{3.27}, we get
\be\label{3.32}
|h'(v)|=\beta\e^{a(v-e^v)}(1+\e^v)^m\left|\e^v-a(1-\e^v)^2+m\frac{\e^v-1}{1+\e^v}\right|\leq C_1\e^{av}, \forall v,
\ee
where $C_1=C_1(\beta, m)>0$. Hence, for any $w_1, w_2\in\mathcal{X}$, we have
\bea\label{3.33}
&&|T(w_1)-T(w_2)|\nn\\
=&&\left|\int_{-\infty}^t(t-\tau)h'\left(2N\tau+\widetilde{w}(\tau)\right)(w_1(\tau)-w_2(\tau))\dd \tau\right|\nn\\
\leq && C_1\sup_{t\leq t_0}|w_1(t)-w_2(t)|\int_{-\infty}^t(t_0-\tau)\e^{a(2N\tau+1)}\dd \tau,
\eea
where $\widetilde{w}(t)$ lies between $w_1(t)$ and $w_2(t)$. Therefore, when $t_0$ is properly chosen, the inequality \eqref{3.33} implies $T: \mathcal{X}\to\mathcal{X}$ is a contraction operator. Consequently, the above argument says that \eqref{3.30} has a unique solution in the neighborhood of $t=-\infty$. Besides, taking the derivative of \eqref{3.30}, we get $w'(t)\to 0$ as $t\to -\infty$ which means the boundary condition \eqref{3.31} is provided. Furthermore, we can extend $w$ to a solution over the entire $\mathbb{R}$ by using \eqref{3.30} and a standard continuation argument. Therefore, the system \eqref{3.27}--\eqref{3.28} is solved. While, in order to find a solution of \eqref{3.24}, the boundary condition
\be\label{3.34}
\lim_{t\to\infty}v(t)=0
\ee
remains to be achieved. To this end, multiplying the both side of \eqref{3.27} by $v'$ and integrating over $(-\infty, t)$, we have
\be\label{3.35}
(v'(t))^2=4N^2+2\beta\int_{-\infty}^v\e^{a(w-\e^w)}(1+\e^w)^m(\e^w-1)\dd w\triangleq 4N^2-2 H(v(t)).
\ee
It is useful to study the critical points of the equation \eqref{3.35} first. Suppose $\tilde{v}$ is a number satisfies
\be\label{3.36}
2N^2=H(\tilde{v}).
\ee
In order to ensure uniqueness at the equilibrium $\tilde{v}$, we need to require that
\be\label{3.37}
H'(\tilde{v})=\beta \e^{a(\tilde{v}-\e^{\tilde{v}})}(1+\e^{\tilde{v}})^m(1-\e^{\tilde{v}})=0.
\ee
Thus the only choice is  $\tilde{v}=0$. Inserting this result into \eqref{3.36}, we can determine the parameter $\beta$,
\be\label{3.38}
\beta=\frac{2 N^2}{\int_{-\infty}^0 \e^{a(v-\e^v)}(1+\e^v)^m(1-\e^v)\dd v}.
\ee
Although we express $\beta$ in the form of an integral, without giving it a more concrete form, we can see from the inequality
\bea
\frac{1}{a\e^a}&\leq&\int_{-\infty}^0\e^{a(v-\e^v)}(1-\e^v)\dd v\nn\\
&\leq&\int_{-\infty}^0\e^{a(v-\e^v)}(1+\e^v)^m(1-\e^v)\dd v\nn\\
&\leq& 2^m\int_{-\infty}^0\e^{a (v-\e^v)}(1-\e^v)\dd v=\frac{2^m}{a\e^a}\nn
\eea
that $\beta$ is a finite number depends on $a$ and $m$.


Since $v'(t)>0$ in the neighborhood of $t=-\infty$, we can rewrite \eqref{3.35} as
\be\label{3.39}
v'(t)=\sqrt{4N^2-2H(v)}\equiv\sqrt{F(v)}
\ee
and $v(t)$ is an increasing function. Besides, there holds the limit
\be\label{3.40}
\lim_{v\to 0^-}\frac{\sqrt{F(v)}}{v}=-\sqrt{\frac{F''(0)}{2}}=-\sqrt{2^m\e^{-a}\beta}<0.
\ee
Furthermore, \eqref{3.39} can be rewritten in the integral form
\be\label{3.41}
\int_{v(\tau)}^{v(t)}\frac{\dd v}{F(v)}=t-\tau.
\ee
In fact, we can deduce from \eqref{3.40}--\eqref{3.41} that $v$ satisfies the boundary condition \eqref{3.34}. To be specific, since $F(v)$ decreases monotonically with respect to $v\leq 0$ and $F(0)=0$, we have $F(v)>0$ when $v<0$. Combining this with \eqref{3.28} and \eqref{3.39}, we see that $v'(t)>0$ whenever $v<0$. In the following we will show that $v(t)<0$ for all $t$. Otherwise, if there exists some $t_0\in(-\infty, \infty)$ such that $v(t_0)=0$. Then the limit \eqref{3.40} says that there is a number $\delta>0$ such that
\be\label{3.42}
\frac{\sqrt{F(v)}}{v}<-\frac{\sqrt{2^m\e^{-a}\beta}}{2},~~t_0-\delta\leq t<t_0.
\ee
Inserting \eqref{3.42} into \eqref{3.41}, we arrive at
\be\nn
t-(t_0-\delta)>-\frac{2}{\sqrt{2^m\e^{-a}\beta}}\int_{v(t_0-\delta)}^{v(t)}\frac{\dd v}{v}=\frac{2}{\sqrt{2^m\e^{-a}\beta}}\ln\Big|\frac{v(t_0)-\delta}{v(t)}\Big|,~~t_0>t>t_0-\delta,
\ee
whose left--hand side is finite while the right--hand side approaches infinity as $t\to t_0$. This is a contradiction. Hence, $v(t)<0$ and $v'(t)>0$ for $t\in(-\infty, \infty)$. Particularly, the limit of $v(t)$ as $t\to\infty$ exists. We claim that the only possible value of this limit is $v_\infty=0$. Otherwise, assume $v_\infty<0$. Using $F(v_\infty)>0$, we see that as $t\to\infty$ the left--hand side of \eqref{3.41} is bounded and the right--hand side of \eqref{3.41} approaches infinity which makes a contradiction. Consequently, we have $v(t)\to 0$ as $t\to\infty$.

To proceed with the solution of \eqref{3.27}, let $\tau>0$ be large enough so that
\be\label{3.43}
\frac{\sqrt{F(v(t))}}{v(t)}<-(1-\vep)\sqrt{2^m\e^{-a}\beta},~~t\geq\tau,~~\vep\in(0, 1).
\ee
Inserting \eqref{3.43} into \eqref{3.41}, we get
\be\nn
t-\tau>-\frac{1}{(1-\vep)\sqrt{2^m\e^{-a}\beta}}\int_{v(\tau)}^{v(t)}\frac{\dd v}{v}=\frac{1}{(1-\vep)\sqrt{2^m\e^{-a}\beta}}\ln\bigg|\frac{v(\tau)}{v(t)}\bigg|, t>\tau,
\ee
which leads us to arrive at the decay estimate
\be\label{3.44}
|v|=O(\e^{-(1-\vep)\sqrt{2^m\e^{-a}\beta}t}),~~~t\to\infty.
\ee
Returning to the original variable $r=\e^t, u(r)=U(\ln r)$, we obtain from \eqref{3.44} that
\be\label{3.45}
|v(r)|=O(r^{-(1-\vep)\sqrt{2^m\e^{-a}\beta}}),~~~r\to\infty.
\ee
Note that $\frac{v'(t)}{|v(t)|}\to\sqrt{2^m\e^{-a}\beta}$ as $t\to\infty$,
therefore, we have
\be\label{3.46}
|v_r(r)|=O(r^{-(1-\vep)\sqrt{2^m\e^{-a}\beta}-1}),~~~r\to\infty.
\ee
The theorem \ref{th3.3} is thus proven.
\end{proof}

Next, we will give the sharp decay estimates of the string solutions obtained for equation \eqref{1.36} for $m>0$ at infinity.

Choosing $r_0>0$ sufficiently large, such that
\be\nn
P=\{p_1, p_2, \cdots, p_N\}\subset B(r_0)=\{x\in\mathbb{R}^2 \mid |x|<r_0\},
\ee
then the equation \eqref{1.36} for $m>0$ becomes
\be\label{3.47}
\triangle v=\beta\e^{a(v-\e^v)}\prod_{s=1}^N|x-p_s|^{-2a}(1+\e^v)^m(\e^v-1),~~|x|>r_0.
\ee

\begin{lemma}\label{d.3}
Suppose $a N<1$, then for any number $b>0$ there is a suitable corresponding constant $C_b>0$ such that the solution obtained in theorem \ref{th3.2} holds the bounds
\be\label{3.48}
-C_b|x|^{-b}< v(x)< 0,~~~|x|>r_0.
\ee
If $a N=1$ and there are at least two distant string centers, then \eqref{3.48} holds for $b=2$. If $a N=1$ and all the string centers coincide, then \eqref{3.48} holds for $b=(1-\vep)\sqrt{2^m\e^{-a}\beta}$.
\end{lemma}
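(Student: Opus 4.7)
The plan is to establish \eqref{3.48} case by case using a barrier comparison with explicit test functions of the form $\phi_b(x) = -C_b|x|^{-b}$ on the exterior region $\{|x|>r_0\}$. Since $v<0$ is already contained in Theorems \ref{th3.2} and \ref{th3.3}, only the lower bound $v(x) > -C_b|x|^{-b}$ requires proof. The key observation is an asymptotic linearization: near $v=0$ and for $|x|$ large, the expansions $e^v-1 = v+O(v^2)$, $(1+e^v)^m = 2^m + O(v)$, $e^{a(v-e^v)} = e^{-a}(1+O(v))$, and $\prod_{s=1}^N|x-p_s|^{-2a} = |x|^{-2aN}(1+O(|x|^{-1}))$ reduce the right--hand side $F(v,x)$ of \eqref{3.47} to $2^m e^{-a}\beta\,|x|^{-2aN}v\,(1+o(1))$.

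For the barrier $\phi = -Cr^{-b}$, the radial identity $\Delta r^{-b} = b^2 r^{-b-2}$ in $\mathbb{R}^2$ yields
\be\nn
\Delta\phi - F(\phi, x) = Cr^{-b-2}\bigl[-b^2 + 2^m e^{-a}\beta(1+o(1))\,r^{2-2aN}\bigr],\quad r\to\infty.
\ee
When $aN<1$, the factor $r^{2-2aN}$ diverges, so for any $b>0$ the bracket is positive beyond some radius $r_1\geq r_0$, and $\phi$ is a classical subsolution of \eqref{3.47} on $\{|x|>r_1\}$. I would then set $w = v-\phi$ and exploit the monotonicity $\partial_v F(v,x)>0$ near $v=0$, which follows from the direct computation $\partial_v[e^{a(v-e^v)}(1+e^v)^m(e^v-1)]\bigl|_{v=0} = 2^m e^{-a}$. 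On the open set $A=\{v<\phi\}$ this monotonicity upgrades to the strict inequality $F(v,x)<F(\phi,x)\leq\Delta\phi$, hence $\Delta w < 0$ on $A$. Choosing $C$ large enough that $Cr_1^{-b}\geq\sup_{|x|=r_1}|v|$ gives $w\geq 0$ on $\{|x|=r_1\}$, and $w\to 0$ at infinity, so strict superharmonicity on $A$ forces $A=\emptyset$. This proves $v\geq\phi$ on $\{|x|>r_1\}$, and by enlarging $C$ appropriately one extends the bound to $\{|x|>r_0\}$ for any $b>0$.

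The case $aN=1$ with at least two distinct centers requires a different input, since the barrier argument above then only permits $b\leq\sqrt{2^m e^{-a}\beta}$ (parameter--dependent). Instead I would invoke the pointwise lower bound $v\geq v_0$ that is built into the construction of Theorem \ref{th3.2}: passing $\delta\to 0$ along the subsequence in \eqref{3.21} gives $v_{\delta_n}\to v_0$ (with $v_0$ as in \eqref{3.6}), so $v(x)\geq v_0(x) = -\sum_s\ln(1+|x-p_s|^{-2})$. The elementary estimate $\ln(1+t)\leq t$ then yields $v(x)\geq -\sum_s|x-p_s|^{-2}\geq -C|x|^{-2}$ for $|x|>r_0$, which is \eqref{3.48} with $b=2$. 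Finally, for $aN=1$ with all centers coinciding, the radial ODE analysis of Theorem \ref{th3.3} has already produced exactly the stated decay with $b=(1-\vep)\sqrt{2^m e^{-a}\beta}$ in \eqref{3.45}, so no further work is needed.

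The main technical obstacle is the comparison step in the $aN<1$ regime, since the classical weak maximum principle on bounded domains does not transfer directly to the unbounded exterior. The remedy is to localize the analysis to the open set $\{v<\phi\}$, where the \emph{strict} monotonicity of $F$ in $v$ yields $\Delta w < 0$; combined with $w=0$ on the relative boundary and $w\to 0$ at infinity, this rules out any interior negative minimum (after a standard $\vep$--shift of $w$) and hence empties the set.
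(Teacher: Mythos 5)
Your proposal is correct and follows essentially the same route as the paper: a barrier comparison with $C|x|^{-b}$ plus the maximum principle for $aN<1$, the inherited bound $v\geq v_\delta\to v_0$ from \eqref{3.21} giving $b=2$ when $aN=1$ with distinct centers, and the ODE decay \eqref{3.45} in the coincident case. Your treatment is if anything slightly more careful than the paper's, since you make explicit the monotonicity of the nonlinearity near $v=0$ and justify the maximum principle on the unbounded exterior via the set $\{v<\phi\}$ and the vanishing of $w$ at infinity, points the paper passes over silently.
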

\begin{proof}
We first show the case when $a N<1$. Introduce the comparison function
\be\label{3.49}
w(x)=C|x|^{-b}.
\ee
Then
\be\label{3.50}
\triangle w=b^2r^{-2}w,~~~|x|=r>r_0>0.
\ee
On the other hand, since $v<0$ and $v\to 0$ as $|x|\to \infty$, we have
\bea\label{3.51}
\triangle v&=&\beta\e^{a(v-\e^v)}\prod_{s=1}^N|x-p_s|^{-2a}(1+\e^v)^m(\e^v-1)\nn\\
&=&\beta\e^{a(v-\e^v)}(1+\e^v)^m\e^{\xi v}v\prod_{s=1}^N|x-p_s|^{-2a}\nn\\
&< &b^2r^{-2}v,~~~|x|=r>r_0.
\eea
where $r_0>0$ is sufficiently large and $\xi\in (0, 1)$. Hence, we have
\be\label{3.52}
\triangle(v+w)< b^2r^{-2}(v+w),~~~|x|=r>r_0.
\ee
For such fixed $r_0$, we may take $C>0$ in \eqref{3.49} large enough to make
\be\nn
(v(x)+w(x))\Big|_{|x|=r_0}> 0.
\ee
Therefore, the maximum principle says that $v\geq -w$ in $\mathbb{R}^2\backslash\overline{B(r_0)}$ as we desired.

If $a N=1$ but there are at least two distant string centers. By the definition of $v_0$, we get
\be\label{3.53}
-v_0(x)=\sum_{s=1}^N\ln\left(1+\frac{1}{|x-p_s|^2}\right)=O(r^{-2}),~~~|x|=r\to\infty,
\ee
combining with the inequalities in \eqref{3.21} and taking the $\delta\to 0$ limit, we conclude that
\be\label{3.54}
-C_2|x|^{-2}<v<0,~~~|x|\to\infty.
\ee
If $a N=1$ and all the string centers are identical, the estimate with $b=(1-\vep)\sqrt{2^m\e^{-a}\beta}$ follows from \eqref{3.45}.
\end{proof}

\begin{lemma}\label{d.4}
For the solution $v$ of \eqref{1.36} for $m>0$, there establishes for sufficiently large $r_0$ the following sharp decay estimate
\be\label{3.55}
|\nabla v|^2\leq C_b|x|^{-b},~~~|x|>r_0,
\ee
where $b>0$ is an arbitrary constant if $aN<1$ and $C_b>0$ is a constant depending on $b$. If $aN=1$ but there are at least two distant string centers, then \eqref{3.55} holds for $b=3(1-\varepsilon)$ where $\varepsilon\in(0, 1)$. If $aN=1$ and all string centers $p_s (s=1,\ldots,N)$ are identical, then $b=(1-\vep)\sqrt{2^m\e^{-a}\beta}+1$.
\end{lemma}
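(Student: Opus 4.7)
The plan is to transfer the pointwise decay of $v$ provided by Lemma~\ref{d.3} to the gradient $\nabla v$ by means of standard interior elliptic gradient estimates, treating the three cases of Lemma~\ref{d.3} separately and falling back on the already--established ODE estimate \eqref{3.46} in the coincident case.

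I would first choose $r_0>0$ large enough that $P=\{p_1,\ldots,p_N\}\subset B(r_0)$, and, for an arbitrary point $x$ with $|x|>2r_0$, consider the ball $B(x,|x|/4)$, which lies strictly inside $\bfR^2\setminus\overline{B(r_0)}$. On this ball one has $|y-p_s|\sim |x|$ uniformly in $y$ and in $s$, so the right--hand side of \eqref{3.47} takes the form $f(y)$ satisfying $|f(y)|\leq C|x|^{-2aN}|v(y)|$: indeed $(1+\e^v)^m(\e^v-1)=\e^{\xi}(1+\e^v)^m\,v$ for some $\xi$ between $0$ and $v$, and since $v<0$ with $v\to 0$ at infinity, both $(1+\e^v)^m$ and the exponential prefactor $\e^{a(v-\e^v)}$ are uniformly bounded on $B(x,|x|/4)$. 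The standard interior gradient estimate for the Laplacian then produces $|\nabla v(x)|\leq C|x|^{-1}\sup_{B(x,|x|/4)}|v|+C|x|\sup_{B(x,|x|/4)}|f|$.

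I would then handle the three cases. For $aN<1$, Lemma~\ref{d.3} yields $|v(y)|\leq C_{b_0}|y|^{-b_0}$ for arbitrary $b_0>0$, so $|\nabla v(x)|\leq C_{b_0}(|x|^{-1-b_0}+|x|^{1-2aN-b_0})$; squaring and choosing $b_0$ sufficiently large relative to any prescribed $b$ yields $|\nabla v|^2\leq C_b|x|^{-b}$. For $aN=1$ with at least two distinct centers, Lemma~\ref{d.3} gives the sharp bound $|v|\leq C|y|^{-2}$, hence $|f|\leq C|y|^{-4}$, and the interior estimate furnishes $|\nabla v(x)|\leq C|x|^{-3}$; squaring gives $|\nabla v|^2\leq C|x|^{-6}$, which is stronger than the claimed bound with $b=3(1-\vep)$. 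For $aN=1$ with all centers coincident, the problem is radial and \eqref{3.46} already established in the proof of Theorem~\ref{th3.3} directly furnishes $|v_r(r)|=O(r^{-(1-\vep)\sqrt{2^m\e^{-a}\beta}-1})$, whose square satisfies the stated bound with $b=(1-\vep)\sqrt{2^m\e^{-a}\beta}+1$.

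The routine content consists in the structural inequality $|f(y)|\leq C|x|^{-2aN}|v(y)|$ and the invocation of the interior gradient estimate. The main obstacle I anticipate is keeping all the implicit constants independent of $x$ and uniform across the three geometric regimes for $P$; in particular, one must verify case--by--case on the ball $B(x,|x|/4)$ that the nonlinear prefactors $\e^{a(v-\e^v)}$ and $(1+\e^v)^m$ are genuinely uniformly bounded (which follows from $v<0$ and Lemma~\ref{d.3}), and that the choice of radius $R=|x|/4$ keeps one strictly away from the singular set $P$. Beyond Lemma~\ref{d.3} and standard interior elliptic regularity, no new ingredient is needed.
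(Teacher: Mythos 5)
Your proposal is correct, but it takes a genuinely different route from the paper. The paper differentiates the reduced equation \eqref{3.47} to obtain a linear elliptic equation \eqref{3.56} for $V=\pa_j v$, uses $L^2$--elliptic theory twice to conclude $V\in W^{2,2}$ and hence $V\to 0$ at infinity, and then constructs the comparison function $u=C|x|^{-b_1}$ and applies the maximum principle from both sides ($V+u\geq 0$ and $V-u\leq 0$) to extract the decay rate, with the exponent bookkeeping $2aN+b+1>2+b_1$ in the case $aN<1$ and, in the case $aN=1$ with distinct centers, an assertion that the coefficient $H(v)$ can be made large by adjusting $\beta$. You instead avoid differentiating the equation altogether: you apply the rescaled interior gradient estimate $|\nabla v(x)|\leq C\bigl(R^{-1}\sup_{B(x,R)}|v|+R\sup_{B(x,R)}|f|\bigr)$ with $R=|x|/4$, together with the structural bound $|f|\leq C|x|^{-2aN}|v|$ (valid since $v<0$ makes $\e^{a(v-\e^v)}$ and $(1+\e^v)^m$ uniformly bounded and $|y-p_s|\sim|x|$ on the ball), and feed in the pointwise decay of $v$ from Lemma \ref{d.3}. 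This is more elementary and more robust: it dispenses with the maximum--principle comparison and with the paper's somewhat delicate claim about enlarging $H(v)$ via $\beta$, and in the case $aN=1$ with two distinct centers it actually yields the stronger bound $|\nabla v|^2\leq C|x|^{-6}$, which subsumes the stated $b=3(1-\vep)$ (the paper's statement of \eqref{3.55} in terms of $|\nabla v|^2$ rather than $|\nabla v|$ is in any event absorbed by your stronger estimate). In the coincident--center case both you and the paper simply invoke the radial ODE estimate \eqref{3.46}, and your observation that its square dominates the stated rate is correct since $b=(1-\vep)\sqrt{2^m\e^{-a}\beta}+1>0$. The only ingredients you use beyond the paper's Lemma \ref{d.3} are standard interior estimates for the Poisson equation, which hold for bounded (here smooth) right--hand sides, so no gap remains.
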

\begin{proof}
Since $v\in L^2(\mathbb{R}^2\backslash\overline{B(r_0)})$ and the right--hand side of \eqref{3.47} is a $L^2$--function on $\mathbb{R}^2\backslash\overline{B(r_0)}$. Using the $L^2$--theory of elliptic equations, we have $v\in W^{2,2}(\mathbb{R}^2\backslash\overline{B(r_0)})$. Set $V=\pa_j v$ and differentiating \eqref{3.47}, we have
\bea\label{3.56}
\triangle (\pa_j v)=&&\frac{\beta\e^{a(v-\e^v)}}{(1+\e^v)^m}\left(\prod_{s=1}^N|x-p_s|^{-2a}\right)
\left(\e^v-m\e^v\frac{\e^v-1}{1+\e^v}-(\e^v-1)^2\right)V\nn\\
&&+\frac{\beta\e^{a(v-\e^v)}}{(1+\e^v)^m}(\e^v-1)\pa_j\left(\prod_{s=1}^N|x-p_s|^{-2a}\right).
\eea
Because of $V\in L^2(\mathbb{R}^2\backslash\overline{B(r_0)}), j=1,2$, using the $L^2$--theory of elliptic equations again, we get $V\in W^{2,2}(\mathbb{R}^2\backslash\overline{B(r_0)}), j=1,2$, which implies $|\nabla v|$ vanishes at infinity. Consequently, there holds
\be\label{3.57}
\triangle V=H(v)r^{-2aN}V+\mathcal{O}(r^{-2aN-b-1}),
\ee
where $H(v)$ tends to some positive constant as $v\to 0$ and the meaning of exponent $b>0$ in the tail above is as given in Lemma \ref{d.3}.

Taking the comparison function
\be\label{3.58}
u(x)=C|x|^{-b_1}, C>1, b_1>1.
\ee
Then, if $aN<1$, we have
\bea\label{3.59}
\triangle u&=&b^2 r^{-2}u=(b^2+1)r^{-2}u-C r^{-2-b_1}\nn\\
&<&H(v)r^{-2aN}u-C r^{-2-b_1},~~r\geq r_0,
\eea
where $r_0$ is sufficiently large. Hence
\be\label{3.60}
\triangle(V+u)\leq H(v)r^{-2aN}(V+u)-(C r^{-2-b_1}+\mathcal{O}(r^{-2aN-b-1})),
\ee
where $b$ can be made as large as needed. Choosing suitable $b_1$ so that $2aN+b+1>2+b_1$ holds, then the inequality \eqref{3.60} gives us
\be\nn
\triangle(V+u)< H(v)r^{-2aN}(V+u),~~|x|>r_0.
\ee
Therefore, we can let $C$ in \eqref{3.58} be large enough so that $(V+u)\Big|_{|x|=r_0}\geq 0$. Consequently, the maximum principle gives us $V+u\geq 0$ for $|x|>r_0$.

If $aN=1$ but there are at least two distant string centers, we have $b=2$ in view of Lemma \ref{d.3}. Since $H(v)$ can be made arbitrarily large due to \eqref{3.21} and the adjustability of $\beta$. Thus, $V$ satisfies
\be\label{3.61}
\triangle V=H(v)r^{-2}V+\mathcal{O}(r^{-5}),
\ee
and
\be\label{3.62}
\triangle(V+u)\leq H(v)r^{-2}(V+u)-(C r^{-2-3(1-\vep)}+\mathcal{O}(r^{-5})),~~|x|>r_\vep.
\ee
Choosing $C\gg 1$ such that $C r^{-2-3(1-\vep)}+\mathcal{O}(r^{-5})>0$, then the above inequality gives us
\be\nn
\triangle(V+u)< H(v)r^{-2}(V+u).
\ee
The maximum principle again gives us $V+u\geq 0$ for $|x|>r_\vep$ when $u$ in \eqref{3.58} is sufficiently large.

On the other hand, for the case $aN<1$, we have
\be\nn
\triangle(V-u)\geq H(v)r^{-2aN}(V-u)+(C r^{-2-b_1}+\mathcal{O}(r^{-2aN-b-1})),
\ee
which means
\be\nn
\triangle(V-u)> H(v)r^{-2aN}(V-u)
\ee
holds for $2aN+b+1>2+b_1$.

When $aN=1$ but there are at least two distant string centers $p$'s. Taking $b_1=3(1-\vep)$, then for large enough $\beta$ and $r_\vep>0$, we have
\be\nn
\triangle(V-u)\geq H(v)r^{-2}(V-u)+(C r^{-2-3(1-\vep)}+\mathcal{O}(r^{-5})),~~|x|>r_\vep,
\ee
which implies $\triangle(V-u)> H(v)r^{-2}(V-u)$ as before. Consequently, in both cases, applying the maximum principle gives us $V<u$ for $|x|>r_\vep$ and large $C$ defined in \eqref{3.58}.

If $aN=1$ and all string centers $p_s, s=1,\ldots,N$ are identical, the estimate has been shown as in \eqref{3.46}.
\end{proof}

\section{Proof of existence for $m<0$}\label{s4}
\setcounter{equation}{0}

In this section, we analyse the situation $m<0$ for equations \eqref{1.35}--\eqref{1.36}. If we set $n=-m$, then the vortex equation \eqref{1.35} and the string equation \eqref{1.36} can be rewritten as
\be\label{4.1}
\triangle v=\beta\frac{(\e^v-1)}{(1+\e^v)^n}+4\pi\sum_{s=1}^N \delta_{p_s}(x),~~n>0,~x\in\mathbb{R}^2
\ee
and
\be\label{4.2}
\triangle v=\beta\left(\e^{v-\e^v}\prod_{s=1}^N|x-p_s|^{-2}\right)^{4\pi G}\frac{(\e^v-1)}{(1+\e^v)^n}+4\pi\sum_{s=1}^N \delta_{p_s}(x),~~n>0,~x\in\mathbb{R}^2,
\ee
respectively. For the case that all string centers concentrate at one point, i.e. $p_s=p_0, s=1,2,\ldots,N$. Without loss of generality, we may assume $p_0$ is the origin of the $\mathbb{R}^2$, then equation \eqref{4.2} reduces to
\be\label{4.3}
\triangle v=\beta\left(\e^{v-\e^v}r^{-2}\right)^{4\pi G}\frac{(\e^v-1)}{(1+\e^v)^n}+4\pi N \delta(x),~~n>0,~x\in\mathbb{R}^2,
\ee

Our main existence results for \eqref{4.1}--\eqref{4.3} read as follows.

\begin{theorem}\label{th4.1}
For any given prescribed vortex points $p_1, p_2,\ldots,p_N\in \mathbb{R}^2$, the equation \eqref{4.1} has a unique negative solution which vanishes at infinity and obey the sharp decay estimates
\bea
|v(x)|&=&O(\e^{-(1-\vep)\sqrt{\frac{\lm}{2^{n}}}|x|}),~~|x|\to\infty,\label{4.4}\\
|\nabla v(x)|&=&O(\e^{-(1-\vep)\sqrt{\frac{\lm}{2^{n}}}|x|}),~~|x|\to\infty,\label{4.5}
\eea
where $\vep>0$ is arbitrary small.
\end{theorem}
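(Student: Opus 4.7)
The plan is to parallel the proof of Theorem \ref{th3.1} while taking advantage of a monotonicity property of the new reaction term to upgrade existence to uniqueness. Write $f(v) = \beta(\e^v-1)/(1+\e^v)^n$, so \eqref{4.1} reads $\triangle v = f(v) + 4\pi\sum_{s=1}^N\delta_{p_s}$. The supersolution $v_+ \equiv 0$ is immediate since $\triangle v_+ = 0$ while the right-hand side is a nonnegative distribution. For the subsolution the direct Taubes candidate $v_\beta$ of the equation $\triangle v = \beta(\e^v-1) + 4\pi\sum_s\delta_{p_s}$ does \emph{not} work: because $(1+\e^{v_\beta})^n > 1$ and $\e^{v_\beta}-1 < 0$, one has $f(v_\beta) > \beta(\e^{v_\beta}-1)$, which flips the subsolution inequality. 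The remedy is to rescale the coupling: let $v_\star$ denote Taubes' solution of $\triangle v = (\beta/2^n)(\e^v-1) + 4\pi\sum_s\delta_{p_s}$. Since $(1+\e^v)^n \leq 2^n$ for $v\leq 0$, one has $f(v) \leq (\beta/2^n)(\e^v-1)$, so $\triangle v_\star \geq f(v_\star) + 4\pi\sum_s\delta_{p_s}$, giving the desired subsolution with $v_\star < 0$. With $v_- = v_\star < v_+ = 0$ bracketing a solution, the background shift \eqref{3.6}--\eqref{3.8} and the monotone iteration \eqref{3.9}--\eqref{3.11} transfer verbatim, with the relaxation constant $k$ chosen larger than $\sup_{v\leq 0}|f'(v)|$, which is finite. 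The iteration converges in $W^{2,2}(\mathbb{R}^2)$ to a negative classical solution vanishing at infinity.

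For uniqueness, a direct calculation gives
\be\nn
f'(v) = \beta\,\frac{\e^v}{(1+\e^v)^{n+1}}\,\big[(1+n) + (1-n)\e^v\big],
\ee
and for $v \leq 0$ we have $0 < \e^v \leq 1$, so the bracket stays $\geq \min(1+n,\,2) > 0$ whether $n \leq 1$ or $n > 1$. Hence $f'(v) > 0$ on $(-\infty, 0]$. If $v_1, v_2$ are two negative solutions of \eqref{4.1}, their difference $w = v_1 - v_2$ is smooth on all of $\mathbb{R}^2$ (the Dirac sources cancel), vanishes at infinity, and satisfies $\triangle w = c(x) w$ with $c(x) = f'(\xi(x)) > 0$ for some $\xi(x) < 0$. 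If $\sup w > 0$ were attained at some $x_0$, we would have $\triangle w(x_0) \leq 0$ yet $c(x_0) w(x_0) > 0$, a contradiction; $\inf w < 0$ is ruled out symmetrically, and so $w \equiv 0$. Note that this strict monotonicity of $f$ on $(-\infty,0]$ has no counterpart in the $m>0$ case, which is why Theorem \ref{th3.1} omits a uniqueness claim.

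Finally, linearizing \eqref{4.1} near infinity where $v \to 0$ yields $\triangle v = (\beta/2^n + o(1))\,v$, since $f'(0) = \beta/2^n$. Comparison with the radial barrier $C(\vep)\e^{-(1-\vep)\sqrt{\beta/2^n}\,|x|}$ delivers \eqref{4.4}. Differentiating \eqref{4.1} in $x_j$ yields an equation for $V = \pa_j v$ of the form $\triangle V = a(x) V$ with $a(x) \to \beta/2^n$ at infinity; after confirming $V \in W^{2,2}$ outside a large ball via $L^2$ elliptic regularity applied to the decay already established for $v$, the same comparison gives \eqref{4.5}. The main obstacle, in my view, is the initial choice of subsolution: it is tempting but wrong to reuse $v_\beta$ directly, because dividing (rather than multiplying) by $(1+\e^v)^n$ reverses the crucial inequality; the factor-$2^n$ rescaling of the coupling is exactly what repairs this. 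Once that is fixed, the rest of the argument is a replay of the Theorem \ref{th3.1} framework, enhanced by the strict positivity of $f'$ on $(-\infty, 0]$ that secures uniqueness.
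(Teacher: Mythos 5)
Your proposal follows essentially the same route as the paper: the subsolution is Taubes' solution at the rescaled coupling $\beta/2^n$ (the paper's $\beta_1$), the supersolution is $v_+\equiv 0$, existence follows by the Theorem \ref{th3.1} monotone iteration, uniqueness comes from the maximum principle applied to $\triangle U = f'(\xi)U$ with $f'(\xi)>0$ for $\xi<0$, and the decay estimates come from linearizing at $v=0$ with $f'(0)=\beta/2^n$. Your explicit verification that $f'(v)=\beta\,\e^v(1+\e^v)^{-n-1}\bigl[(1+n)+(1-n)\e^v\bigr]>0$ on $(-\infty,0]$ merely makes precise the positivity of the coefficient that the paper asserts implicitly, so the two arguments coincide.
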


{\bf Remark}: Unlike the $m>0$ case studied in the previous section, for $m<0$ i.e. $n>0$, we have uniqueness of the vortex solution.

\begin{theorem}\label{th4.2}
Suppose $4\pi G N\leq1$ but there are at least two distant string centers, then the equation \eqref{4.2} has a negative solution which vanishes at infinity.
\end{theorem}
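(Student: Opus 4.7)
The argument will closely parallel the proof of Theorem \ref{th3.2}, with the bounded multiplier $(1+\e^v)^{-n}$ playing the role of $(1+\e^v)^m$. Consider the $\delta$-regularization
\begin{equation*}
\triangle v=\beta\left(\e^{v-\e^v}\prod_{s=1}^N(\delta+|x-p_s|^2)^{-1}\right)^{4\pi G}\frac{\e^v-1}{(1+\e^v)^n}+\sum_{s=1}^N\frac{4\delta}{(\delta+|x-p_s|^2)^2},
\end{equation*}
for $0<\delta<\tfrac12$, whose formal $\delta\to 0$ limit is \eqref{4.2}.

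The first task is to produce ordered sub- and supersolutions. The constant $v_+=0$ is a supersolution by the very same computation as in Lemma \ref{d.1}: the factor $(\e^v-1)$ vanishes at $v=0$ while the regularization bubble is strictly positive. For the subsolution I take $v_-=v_\delta$ from \eqref{3.16}. Since $v_\delta<0$, we have $1\le 1+\e^{v_\delta}\le 2$, so $(1+\e^{v_\delta})^{-n}\in[2^{-n},1]$; using the lower bound, the subsolution inequality reduces to
\begin{equation*}
\beta\, 2^{-n}\left(\e^{v_\delta-\e^{v_\delta}}\prod_{s=1}^N(\delta+|x-p_s|^2)^{-1}\right)^{4\pi G}(\e^{v_\delta}-1)+g\le 0,
\end{equation*}
which is precisely \eqref{3.17} with $\beta$ replaced by $\beta\, 2^{-n}$. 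Hence Lemma \ref{d.2} applies verbatim and $v_\delta$ is a subsolution provided $\beta\ge 2^n C_0$; since $\beta=4\lambda/\kappa^2$ with $\lambda$ the free coupling constant of \eqref{1.30}, this is a legitimate standing assumption. Ni's existence lemma \cite{Ni}, as invoked in Theorem \ref{th3.2}, then supplies a classical solution $v(\,\cdot\,;\delta)$ of the regularized equation satisfying $v_\delta\le v(\,\cdot\,;\delta)\le 0$ and $v(x;\delta)\to 0$ as $|x|\to\infty$.

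The final step is to let $\delta=\delta_n\to 0$ along a subsequence. Away from the string set $P=\{p_1,\dots,p_N\}$ the right-hand side is uniformly smooth and bounded, so elliptic bootstrapping yields a $C^\infty_{\mathrm{loc}}(\mathbb{R}^2\setminus P)$ limit $v$. To carry the equation through $P$, I represent the regularized equation on a disk containing $P$ by its Green function; the only potentially singular ingredient in the limit is $\prod_{s=1}^N|x-p_s|^{-8\pi G}$. Near any distinct center $q$ of multiplicity $k(q)$ this reduces to $|x-q|^{-8\pi G\, k(q)}$ times a smooth factor. The hypothesis that at least two centers are distinct forces $k(q)\le N-1$, and combined with $4\pi G N\le 1$ this yields $8\pi G\, k(q)\le 2(N-1)/N<2$, so each such weight lies in $L^p_{\mathrm{loc}}$ for some $p>1$. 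Dominated convergence therefore carries the integral equation through to the limit, and $v$ is a distributional solution of \eqref{4.2} that inherits $v\le 0$ and $v(x)\to 0$ at infinity. Because the Dirac sources prevent $v\equiv 0$, the strong maximum principle upgrades $v\le 0$ to $v<0$ on all of $\mathbb{R}^2$.

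\textbf{Main obstacle.} The crux is the local $L^p$-integrability, for some $p>1$, of the weight $\prod_s|x-p_s|^{-8\pi G}$ in the limit $\delta\to 0$. It is precisely the hypothesis of at least two distinct centers that prevents the multiplicity $k(q)$ from reaching $N$ at any single point, so that each singularity remains of order strictly below the critical value $2$; without it, the borderline case $4\pi G N=1$ would give $|x-p_0|^{-2}\notin L^p_{\mathrm{loc}}$ for any $p>1$, and the Green-function passage to the limit would break down—this is why the fully coincident case is handled separately by an ODE/fixed-point method in the spirit of Theorem \ref{th3.3}.
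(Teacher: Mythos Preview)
Your proposal is correct and follows the same regularization/sub--supersolution/limit scheme that the paper uses for Theorem \ref{th4.2} (see point (b) in Section \ref{s4}). The only cosmetic difference is in the subsolution estimate: you bound $(1+\e^{v_\delta})^{-n}\ge 2^{-n}$ to reduce directly to Lemma \ref{d.2} with $\beta$ replaced by $\beta\,2^{-n}$, whereas the paper invokes the monotonicity $\frac{\e^{v_\delta}-1}{(1+\e^{v_\delta})^n}\le\frac{\e^{v_{1/2}}-1}{(1+\e^{v_{1/2}})^n}$ on the compact region; both lead to the same conclusion.
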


\begin{theorem}\label{th4.3}
Under the condition $4\pi G N=1$ but $N\geq1$, the equation \eqref{4.3} has a radial symmetric solution that vanishes at infinity with the choice
\be\label{4.6}
\beta=\frac{2 N^2}{\int_{-\infty}^0 \e^{4\pi G(v-\e^v)}\frac{(1-\e^v)}{(1+\e^v)^n}\dd v}.
\ee
\end{theorem}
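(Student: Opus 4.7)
\textbf{Proof plan for Theorem \ref{th4.3}.} The strategy is to mimic the argument carried out for Theorem \ref{th3.3} in the $m>0$ case, since the nonlinear structure of \eqref{4.3} differs from \eqref{3.24} only through replacement of the factor $(1+\e^v)^m$ by $(1+\e^v)^{-n}$. With all centers placed at the origin and $a=4\pi G$, radial symmetry and the substitution $t=\ln r$ together with the constraint $aN=1$ reduce \eqref{4.3} on $r>0$ to the autonomous second-order ODE
\be\nn
v''=\beta\,\e^{a(v-\e^v)}\,\frac{\e^v-1}{(1+\e^v)^n},\qquad t\in\mathbb{R},
\ee
subject to the asymptotic conditions $v(t)/t\to 2N$ and $v'(t)\to 2N$ as $t\to-\infty$, which encode the Dirac source $4\pi N\delta(x)$ at the origin via the removable-singularity theorem.

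Next, I would set $w=v-2Nt$ and recast the ODE as the integral equation $w=T(w)$ exactly as in \eqref{3.30}, where now the integrand involves $h(v)=\beta\e^{a(v-\e^v)}(\e^v-1)(1+\e^v)^{-n}$. On the space $\mathcal{X}=\{w\in C(-\infty,t_0]:\lim_{t\to-\infty}w(t)=0,\ \sup|w|\leq 1\}$, one checks that $|h(v)|$ and $|h'(v)|$ are both dominated by $C\e^{av}$ uniformly on bounded sets, since the factor $(1+\e^v)^{-n}$ is bounded by $1$ for $v\leq 0$ and by $2^{-n}$ near $v=0$. Choosing $t_0$ sufficiently negative makes $T$ a contraction on $\mathcal{X}$, and standard ODE continuation extends the resulting solution to all of $\mathbb{R}$. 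Multiplying by $v'$ and integrating from $-\infty$ produces the energy identity
\be\nn
(v'(t))^2=4N^2-2H(v(t)),\qquad H(v)=-\beta\int_{-\infty}^{v}\e^{a(w-\e^w)}\frac{\e^w-1}{(1+\e^w)^n}\,\dd w.
\ee
Forcing $v(t)\to 0$ as $t\to+\infty$ requires the unique equilibrium $\tilde v=0$ of $F(v)=4N^2-2H(v)$ to be attained, which is equivalent to $H(0)=2N^2$; this pins down $\beta$ as in \eqref{4.6}.

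With $\beta$ so chosen, $F(v)>0$ for $v<0$ and $F(0)=F'(0)=0$ while $F''(0)=2\beta\e^{-a}2^{-n}>0$, so the first-order equation $v'(t)=\sqrt{F(v)}$ admits the strictly increasing solution supplied by the contraction, and the limit analysis of \eqref{3.41}--\eqref{3.42} carries over verbatim to show $v(t)<0$ on $\mathbb{R}$ and $v(t)\to 0$ as $t\to+\infty$. The asymptotic relation $\sqrt{F(v)}/v\to -\sqrt{\beta\e^{-a}/2^n}$ as $v\to 0^-$ then yields, by the integral inversion \eqref{3.41}, the exponential decay $|v(t)|=O(\e^{-(1-\vep)\sqrt{\beta\e^{-a}/2^n}\,t})$ as $t\to\infty$, which translates back to a power decay in $r$. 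The step I expect to be the main obstacle is verifying the finiteness and positivity of the normalizing integral in \eqref{4.6}: one must show $\int_{-\infty}^{0}\e^{a(v-\e^v)}(1-\e^v)(1+\e^v)^{-n}\,\dd v$ is bounded above and below by explicit constants, analogous to the two-sided bound $1/(a\e^a)$ and $2^m/(a\e^a)$ in Theorem \ref{th3.3}; here the factor $(1+\e^v)^{-n}$ lies in $[2^{-n},1]$ for $v\leq 0$, so the integral is sandwiched between $2^{-n}/(a\e^a)$ and $1/(a\e^a)$, confirming that $\beta$ is a well-defined finite positive constant depending only on $a,n,N$, and completing the argument.
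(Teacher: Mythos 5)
Your proposal is correct and follows essentially the same route as the paper: the paper proves Theorem \ref{th3.3} in detail and then settles Theorem \ref{th4.3} by the identical radial reduction, contraction mapping near $t=-\infty$, continuation, and energy-identity argument, noting (its item (c)) only the modified choice \eqref{4.6} of $\beta$ and the two-sided bound on the normalizing integral, which you reproduce correctly as $2^{-n}/(a\e^a)\leq\int_{-\infty}^{0}\e^{a(v-\e^v)}(1-\e^v)(1+\e^v)^{-n}\,\dd v\leq 1/(a\e^a)$. Incidentally, your computation $F''(0)=2\beta\e^{-a}2^{-n}$, giving the decay rate $\sqrt{\beta\e^{-a}/2^{n}}$, is the faithful analogue of \eqref{3.40}; the exponent $\sqrt{\beta/(2^{n-2}\e^{a})}$ stated in the paper's Lemma \ref{e.3} appears to carry a factor-of-two misprint.
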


The solutions we obtained from Theorem \ref{th4.2} and Theorem \ref{th4.3} have the following sharp decay estimates.

\begin{lemma}\label{e.3}
Suppose $a N<1$, then for any number $b>0$ there is a suitable corresponding constant $C_b>0$ such that the solution obtained by theorem \ref{th4.2} holds the bounds
\be\label{4.7}
-C_b|x|^{-b}< u(x)< 0,~~~|x|>r_0,
\ee
where $r_0$ is sufficiently large so that $\{p_1,p_2,\ldots,p_N\}\subset\{x|~|x|<r_0\}$. If $a N=1$ and there are at least two distant string centers, then \eqref{4.7} holds for $b=2$. If $a N=1$ and all the string centers coincide, then \eqref{4.7} holds for $b=(1-\vep)\sqrt{\frac{\beta}{2^{n-2}\e^a}}$.
\end{lemma}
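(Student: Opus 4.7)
The plan is to mimic the structure of Lemma \ref{d.3}, adapting only the nonlinearity from $(1+\e^v)^m(\e^v-1)$ to $(\e^v-1)/(1+\e^v)^n$ in the right--hand side of \eqref{4.2}. Since $v<0$ on $\mathbb{R}^2$ and $v\to 0$ at infinity, I would first linearize near $v=0$ by writing $\e^v-1=\e^{\xi v}\,v$ for some $\xi=\xi(x)\in(0,1)$, which recasts \eqref{4.2} on $|x|>r_0$ as
\be\nn
\triangle v=A(x)\,v,\qquad A(x)=\beta\,\e^{a(v-\e^v)}\,\e^{\xi v}(1+\e^v)^{-n}\prod_{s=1}^N|x-p_s|^{-2a},
\ee
where $A(x)>0$ and $A(x)\sim \beta\e^{-a}2^{-n}r^{-2aN}$ as $|x|\to\infty$ thanks to the uniform convergence $v\to 0$ already provided by Theorems \ref{th4.2} and \ref{th4.3}.

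Next, I would introduce the comparison function $w(x)=C|x|^{-b}$ satisfying $\triangle w=b^2r^{-2}w$ on $|x|>0$ and handle the three cases separately. For $aN<1$, the asymptotic $A(x)r^2\to\infty$ ensures that for any prescribed $b>0$ one may enlarge $r_0$ so that $A(x)>b^2r^{-2}$ on $|x|>r_0$; since $v<0$, this gives $\triangle v<b^2r^{-2}v$, hence
\be\nn
\triangle(v+w)<b^2r^{-2}(v+w)\quad\text{on}\quad|x|>r_0.
\ee
Choosing $C>0$ large enough that $(v+w)|_{|x|=r_0}>0$ and using $(v+w)\to 0$ at infinity, the maximum principle yields $v+w\geq 0$, i.e.\ $v\geq -C|x|^{-b}$ on $|x|>r_0$, which is \eqref{4.7} with constant $C_b=C$.

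For the borderline case $aN=1$ with at least two distinct centers, rather than rerunning the linearized argument (which now only barely fails to produce a rate), I would exploit the subsolution bound inherited from the proof of Theorem \ref{th4.2}: passing to the $\delta\to 0$ limit in $v_\delta\leq v(x;\delta)\leq 0$ gives $v_0\leq v\leq 0$, and the explicit identity $-v_0(x)=\sum_{s=1}^N\ln(1+|x-p_s|^{-2})=O(r^{-2})$ immediately yields \eqref{4.7} with $b=2$. For the coincident--center subcase, the sharper exponent $b=(1-\vep)\sqrt{\beta/(2^{n-2}\e^a)}$ is already extracted from the radial ODE analysis in Theorem \ref{th4.3}; I would simply transcribe the derivation of \eqref{3.44}--\eqref{3.45} with $(1+\e^v)^m$ replaced by $(1+\e^v)^{-n}$.

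The main obstacle is the first case $aN<1$, since the target rate $b$ is arbitrary and one has to control the coefficient $A(x)$ from below uniformly on the complement of a large ball; this demands verifying that the auxiliary factors $\e^{a(v-\e^v)}$, $\e^{\xi v}$ and $(1+\e^v)^{-n}$ stay bounded away from $0$ as $|x|\to\infty$, which in turn rests on the uniform decay $v\to 0$ at infinity. Once this is in hand, the rest is a standard super--/subsolution comparison on an exterior domain, and the remaining two cases are essentially direct corollaries of results already established in Section 4.
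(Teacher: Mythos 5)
Your proposal coincides with the paper's own treatment: the paper proves this lemma only via remark (d), which instructs the reader to rerun the comparison--function/maximum--principle argument of Lemma \ref{d.3} with the nonlinearity $(\e^v-1)/(1+\e^v)^n$, and that is exactly what you do in all three cases (linearization $\e^v-1=\e^{\xi v}v$ plus the barrier $C|x|^{-b}$ for $aN<1$, the sandwich $v_0\le v\le 0$ inherited from the regularization \eqref{4.10} for $aN=1$ with distinct centers, and transcription of the radial ODE analysis for coincident centers). One caveat worth noting: a faithful transcription of \eqref{3.40}--\eqref{3.45} with $m=-n$ produces the rate $(1-\vep)\sqrt{\beta/(2^{n}\e^a)}$ rather than the stated $(1-\vep)\sqrt{\beta/(2^{n-2}\e^a)}$, a factor--of--two discrepancy that sits in the paper's own statement (and equally affects its remark (d) proof), not in your argument.
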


\begin{lemma}\label{e.4}
For the solution $v$ of \eqref{1.36} with $m>0$, there establishes for sufficiently large $r_0$ the following sharp decay estimate
\be\label{4.8}
|\nabla v|^2\leq C_b|x|^{-b},~~~|x|>r_0,
\ee
where $r_0$ is sufficiently large so that $\{p_1,p_2,\ldots,p_N\}\subset\{x|~|x|<r_0\}$, $b>0$ is an arbitrary constant if $aN<1$ and $C_b>0$ is a constant depending on $b$. If $aN=1$ but there are at least two distant string centers, then \eqref{4.8} holds for $b=3(1-\varepsilon)$ where $\varepsilon\in(0, 1)$. If $aN=1$ and all string centers $p_s (s=1,\ldots,N)$ are identical, then $b=(1-\vep)\sqrt{\frac{\beta}{2^{n-2}\e^a}}+1$.
\end{lemma}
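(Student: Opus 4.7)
The plan is to mirror the proof of Lemma~\ref{d.4} step by step, adjusting only the algebraic coefficients forced by the replacement of $(1+\e^v)^m$ in the right-hand side of \eqref{1.36} by $(1+\e^v)^{-n}$. The skeleton (elliptic $L^2$-bootstrap followed by a maximum-principle comparison) carries over with essentially unchanged parameter choices, and the radial case is inherited directly from Theorem~\ref{th4.3}.

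First, fix $r_0$ so large that $P\subset B(r_0)$. By Lemma~\ref{e.3} we have $v\in L^2(\mathbb{R}^2\setminus\overline{B(r_0)})$, and the right-hand side of \eqref{4.2} is smooth in $v$ with $|\e^v-1|\leq C|v|$, hence lies in $L^2$ on the same exterior; the $L^2$-theory of elliptic equations then gives $v\in W^{2,2}(\mathbb{R}^2\setminus\overline{B(r_0)})$. Differentiating \eqref{4.2} in $x_j$ and setting $V=\partial_j v$ yields
\berr
\triangle V = \frac{\beta\e^{a(v-\e^v)}}{(1+\e^v)^n}\!\left(\prod_{s=1}^N|x-p_s|^{-2a}\right)\!\Psi_n(v)\,V + \frac{\beta\e^{a(v-\e^v)}(\e^v-1)}{(1+\e^v)^n}\,\partial_j\!\left(\prod_{s=1}^N|x-p_s|^{-2a}\right),
\eerr
where
\[
\Psi_n(v)=\e^v-\frac{n\e^v(\e^v-1)}{1+\e^v}-a(\e^v-1)^2,\qquad \Psi_n(0)=1.
\]
A second application of the $L^2$-theory gives $V\in W^{2,2}(\mathbb{R}^2\setminus\overline{B(r_0)})$, so $|\nabla v|\to 0$ at infinity; combining $\partial_j(\prod|x-p_s|^{-2a})=\mathcal{O}(r^{-2aN-1})$ with the decay $|\e^v-1|=\mathcal{O}(r^{-b})$ from Lemma~\ref{e.3} rearranges the equation asymptotically into
\[
\triangle V = H(v)\,r^{-2aN}V+\mathcal{O}(r^{-2aN-b-1}),\qquad H(v)\to \frac{\beta}{2^n\e^a}>0.
\]

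The conclusion follows from the comparison argument with $u(x)=C|x|^{-b_1}$, which satisfies $\triangle u = b_1^2 r^{-2}u$, in exact analogy with Lemma~\ref{d.4}. For $aN<1$ one chooses $b_1$ so that $2aN+b+1>2+b_1$; then $\triangle(V\pm u)$ is controlled by $H(v)r^{-2aN}(V\pm u)$ with the correct sign on $\{|x|>r_0\}$, and picking $C$ large enough that $(V\pm u)|_{|x|=r_0}$ has the right sign lets the maximum principle produce $|V|\leq C|x|^{-b_1}$; iterating promotes $b_1$ to any prescribed $b>0$. For $aN=1$ with at least two distinct centers, Lemma~\ref{e.3} gives $b=2$; choosing $b_1=3(1-\varepsilon)$ and exploiting the adjustability of $\beta$, which scales $H$, reproduces the required inequalities. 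For $aN=1$ with all centers coincident, no comparison is needed: the radial analysis of Theorem~\ref{th4.3} already yields $v'(t)/|v(t)|\to -\sqrt{\beta/(2^n\e^a)}$ as $t\to\infty$ (the analogue of \eqref{3.46}), which transcribes in the variable $r=\e^t$ into the claimed decay exponent.

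The only point that demands genuine care is verifying that the effective coefficient $H(v)r^{-2aN}$ remains bounded below by a positive multiple of $r^{-2aN}$ uniformly on $\{|x|\geq r_0\}$, so that the comparison in Step~3 is valid. Since $\Psi_n(0)=1$ and $v\to 0$ at infinity, continuity reduces this to choosing $r_0$ sufficiently large, after which every remaining step is a direct transcription of the corresponding argument in Lemma~\ref{d.4}.
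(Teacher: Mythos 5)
Your proposal takes essentially the same route as the paper: the paper's entire proof of Lemma \ref{e.4} is the remark (d) at the end of Section 4, which simply says to repeat the argument of Lemma \ref{d.4} with the same comparison function $u=C|x|^{-b_1}$ and the maximum principle, and your bootstrap--differentiate--compare scheme is a faithful (indeed more careful) transcription of that argument. Your linearization coefficient $\Psi_n(v)=\e^v-\frac{n\e^v(\e^v-1)}{1+\e^v}-a(\e^v-1)^2$ is the correct one; note that the paper's own \eqref{3.56} misprints the sign of the $m$--term and drops the factor $a$ in front of $(\e^v-1)^2$, so you have in fact repaired two typos. The uniform positivity of $H(v)$ near infinity, which you isolate as the one delicate point, is handled the same way the paper handles it (continuity plus $v\to 0$, and adjustability of $\beta$ in the $aN=1$ two--center case).

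One discrepancy does deserve flagging. In the radial case $aN=1$ with coincident centers you compute $v'(t)/|v(t)|\to\sqrt{\beta/(2^n\e^a)}$ (your minus sign is a slip: $v<0$ and $v'>0$, so the ratio is positive) and then assert that this ``transcribes into the claimed decay exponent.'' It does not: the lemma claims $b=(1-\vep)\sqrt{\beta/(2^{n-2}\e^a)}+1=(1-\vep)\,2\sqrt{\beta/(2^n\e^a)}+1$, which is \emph{twice} your rate. Your value is the one consistent with the paper's own computation pattern: substituting $m=-n$ into \eqref{3.40} gives $\sqrt{F''(0)/2}=\sqrt{2^{-n}\e^{-a}\beta}=\sqrt{\beta/(2^n\e^a)}$, since $\frac{\dd}{\dd v}\Bigl[\frac{\e^v-1}{(1+\e^v)^n}\Bigr]\Big|_{v=0}=2^{-n}$. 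So the factor $2^{n-2}$ in the statements of Lemmas \ref{e.3} and \ref{e.4} is almost certainly a misprint for $2^n$, and your derivation is sound; but as written your proof establishes a different constant from the one claimed, and you should have noticed the mismatch and said so explicitly rather than asserting agreement with the stated $b$.
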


For the existence and asymptotic properties of equations \eqref{1.35}--\eqref{1.36} given above for $m<0$, its proof ideas are basically the same as the case given in the previous section for $m>0$. So we only point out the differences between the two cases and will not go into details.

(a) In the proof of Theorem \ref{th4.1}, we also consider the problem \eqref{3.3}, but taking $\beta=\beta_1\equiv\frac{\beta}{2^n}$. Then the equation \eqref{3.3} has a negative solution $v_{\beta_1}$ with $v_{\beta_1}\to 0$ as $|x|\to\infty$ and
\bea\label{4.9}
\triangle v_{\beta_1}&=&\frac{\beta}{2^n}(\e^{v_{\beta_1}}-1)+4\pi\sum_{s=1}^N \delta_{p_s}(x)\nn\\
&\geq&\beta\frac{\e^{v_{\beta_1}}-1}{(1+\e^{v_{\beta_1}})^n}+4\pi\sum_{s=1}^N \delta_{p_s}(x)
\eea
in the sense of distribution. Therefore, $\bar{v}\equiv v_{\beta_1}$ is a subsolution of \eqref{4.1} over $\mathbb{R}^2$ in the sense of distribution.
Using a similar proof of the supersolution of the equation and the monotone iteration process as in Theorem \ref{3.1}, we know that the equation \eqref{4.1} has a solution $v$ which vanishes at infinity. In fact, the solution $v$ is unique. Otherwise, let $v_1(x)$ and $v_2(x)$ are both solutions to the equation \eqref{4.1} and $U(x)=v_1(x)-v_2(x)$. Then $U(x)$ satisfies $U(x)\to 0$ as $|x|\to\infty$, and the equation
\be\nn
\triangle U=\beta\frac{\e^\xi\left(1+n(1-\e^\xi)(1+\e^\xi)^{-1}\right)}{(1+\e^\xi)^n}U,
\ee
where $\xi$ lies between $v_1(x)$ and $v_2(x)$. Then the maximum principle shows that $U=0$. Hence, we must have $v_1(x)=v_2(x)$ which implies that the equation \eqref{4.1} has a unique solution which vanishes at infinity.

(b) To prove Theorem \ref{th4.2}, we need to consider a regularized version of the equation \eqref{4.2},
\be\label{4.10}
\triangle v=\beta\left(\e^{v-\e^v}\prod_{s=1}^N(\delta+|x-p_s|^2)^{-1}\right)^{4\pi G}\!\!\frac{(\e^v-1)}{(1+\e^v)^n}+\sum_{s=1}^N\frac{4\delta}{(\delta+|x-p_s|^2)^2},
\ee
where $\delta\in(0,1)$ is a parameter. Following the proof of Lemma \ref{d.1} and Lemma \ref{d.2}, we clearly see that $v_+=0$ and $v_-=v_\delta$ are super- and subsolution of \eqref{4.10}, respectively. It is worth noting that we used inequality
\be\nn
\frac{\e^{v_\delta}-1}{(1+\e^{v_\delta})^n}\leq\frac{\e^{v_{1/2}}-1}{(1+\e^{v_{1/2}})^n}
\ee
in the proof of the subsolution of \eqref{4.10}, which is different from the enlargement technique in Lemma \ref{d.2}.

(c) The major difference between the process of proving Theorem \ref{th3.3} and Theorem \ref{th4.3} is that we take $\beta$ as
\be\nn
\beta=\frac{2 N^2}{\int_{-\infty}^0 \e^{4\pi G(v-\e^v)}\frac{(1-\e^v)}{(1+\e^v)^n}\dd v}
\ee
in Theorem \ref{th4.3}. This choosing is also reasonable, because of the bounds
\bea
\frac{1}{2^m a\e^a}&=&\int_{-\infty}^0 \e^{4\pi G(v-\e^v)}\frac{(1-\e^v)}{2^m}\dd v\nn\\
&\leq&\int_{-\infty}^0 \e^{4\pi G(v-\e^v)}\frac{(1-\e^v)}{(1+\e^v)^n\dd v}\nn\\
&\leq&\int_{-\infty}^0 \e^{4\pi G(v-\e^v)}(1-\e^v)\dd v=\frac{1}{a\e^a}.
\eea

(d) In order to obtain the asymptotic behavior described in Lemma \ref{e.3} and Lemma \ref{e.4}, we introduce the same comparison functions as in Lemma \ref{d.3} and Lemma \ref{d.4} and using the maximum principle to get conclusions.

\medskip

{\bf Acknowledgments.}
This work was supported by NSFC-12101197 and China Postdoctoral Science Foundation. No. 2022M721022.

\end{document}